\algrenewcommand{\algorithmiccomment}[1]{\tabto{9cm} // #1}
\newtheorem{theorem}{Theorem}
\newtheorem{definition}[theorem]{Definition}
\newtheorem{lemma}[theorem]{Lemma}
\newtheorem{algorithm}[theorem]{Algorithm}
\newtheorem{observation}[theorem]{Observation}
\newtheorem{corollary}[theorem]{Corollary}
\newenvironment{proofsketch}[1][\proofname\ sketch]{\par
  \pushQED{\qed}%
  \normalfont \topsep6\p@\@plus6\p@\relax
  \trivlist
  \item[\hskip\labelsep
        \itshape
    #1\@addpunct{.}]\ignorespaces
}{%
  \popQED\endtrivlist\@endpefalse
}
\title{Analyzing Catastrophic Backtracking Behavior in Practical
  Regular Expression Matching}
\author{Martin Berglund
  \institute{Department of Computing Science,\\
    Ume\aa{} University,\\
    Ume\aa{}, Sweden}
  \email{mbe@cs.umu.se}
  \and
  Frank Drewes
  \institute{Department of Computing Science,\\
    Ume\aa{} University,\\
    Ume\aa{}, Sweden}
  \email{drewes@cs.umu.se}
  \and
  Brink van der Merwe
  \institute{
    Department of Mathematical Sciences,\\
    Computer Science Division,\\
    University of Stellenbosch,\\
    Stellenbosch, South Africa
  }
  \email{abvdm@cs.sun.ac.za}
}
\begin{document}
\renewcommand{\L}{\ensuremath{\mathcal{L}}}
\renewcommand{\P}{\ensuremath{\mathcal{P}}}
\newcommand{\sig}{\ensuremath{\Sigma}}
\newcommand{\eps}{\ensuremath{\varepsilon}}
\newcommand{\jlr}[1]{\texttt{P\$#1}}
\long\def\com#1{%
\begin{center}
\fbox{\parbox{.8\linewidth}{%
\textit{}\quad#1
}}
\end{center}
}

\newenvironment{faketheorem}[2]{\par\smallskip\noindent\textbf{#1 \ref{#2}.}\it}{}

\newcommand{\run}{\mathit{run}}
\newcommand{\btr}{\mathit{btr}}
\newcommand{\success}{\mathit{success}}
\newcommand{\acc}{\ensuremath{\textit{Acc}}}
\newcommand{\rej}{\ensuremath{\textit{Rej}}}
\newcommand{\nv}{\ensuremath{\textit{NV}}}
\newcommand{\xto}[1]{\ensuremath{\xrightarrow{\smash{#1}}}}
\newcommand{\td}{\textit{td}}
\newcommand{\sps}{\ensuremath{\flat}}

\newcommand{\nat}{\mathbb N}
\newcommand{\posnat}{\mathbb N_+}
\newcommand{\nxt}{\mathit{next}}
\newcommand{\match}{\textsc{Match}}
\newcommand{\true}{\textit{true}}
\newcommand{\false}{\textit{false}}
\newcommand{\nil}{\textit{nil}}
\newcommand{\Or}{\mathop|}
\newcommand{\stt}{\textit{stt}}
\newcommand{\nfa}[1]{\overline{#1}}
\newcommand{\da}{\mathit{da}}

\long\def\comment#1{}

\newcommand{\fc}[2]{\marginpar{\color{black}\textbf{#2}}\textcolor{red}{#1}}

\newcommand{\Th}{\ensuremath{\textit{Th}}}
\newcommand{\Thp}{\ensuremath{\textit{Th}^p\!}}
\newcommand{\Jp}{J^p\!}

\maketitle

\begin{abstract}
  We develop a formal perspective on how regular expression matching works
  in Java\footnote{Java is a registered trademark of Oracle and/or its
    affiliates. Other names may be trademarks of their respective
    owners.}, a popular representative of the category of
  regex-directed matching engines. In particular, we define an automata model which
  captures all the aspects needed to study such matching engines
  in a formal way.  Based on this, we propose two types of static analysis, which take
  a regular expression and tell whether there exists a family of
  strings which makes Java-style matching run in exponential time.
\end{abstract}

\section{Introduction}
Regular expressions constitute a concise, powerful, and useful pattern
matching language for strings.  They are commonly used to specify
token lexemes for scanner generation during compiler construction, to
validate input for web-based applications, to recognize meaningful
patterns in natural language processing and data mining, for example,
locating e-mail addresses, and to guard against computer system
intrusion.  Libraries for their use are found in most widely-used
programming languages.

There are two fundamentally different types of regex matching engines:
DFA (Deterministic Finite Automaton) and NFA (Non-deterministic Finite
Automaton) matching engines.  DFA matchers are used in (most versions
of) awk, egrep, and in MySQL, and are based on the NFA to DFA subset
conversion algorithm.  This paper deals with NFA engines, which are
found in GNU Emacs, Java, many command line tools, .NET, the PCRE
(Perl compatible regular expressions) library, Perl, PHP, Python, Ruby
and Vim.  NFA matchers make use of an input-directed depth-first
search on an NFA, and thus the matching performed by NFA engines is
referred to as backtracking matching.  NFA engines have made it
possible to extend regular expressions with captures, possessive
quantifiers, and backreferences.

Theory has however not kept pace with practice when it comes to
understanding NFA engines.  We now have NFA matchers that are more
expressive and succinct than the originally developed DFA matchers,
but are also in some cases significantly slower.  Although it is known
that in the worst case, the matching time of NFA matchers is
exponential in the length of input strings~\cite{kirrage:13}, their
performance characteristics and operational matching semantics are
poorly understood in general.  Exponential matching time, also
referred to as catastrophic backtracking (by NFA matchers), can of
course be avoided by using the DFA matchers, but then a less
expressive pattern matching language has to be used. Catastrophic
backtracking has potentially severe security implications, as
denial-of-service attacks are possible in any application which
matches a regular expression to data not carefully controlled by the
application.


This work was motivated by the algorithm presented by
Kirrage et.\ al.\ in \cite{kirrage:13}, which for 
regular expressions with catastrophic backtracking comes up with a
family of strings exhibiting this exponential matching time
behavior. However, they only consider the case where the exponential
matching behavior can be exhibited by strings that are rejected.  We
investigate the complexity of deciding exponential backtracking
matching on strings that are rejected (which we refer to as deciding
exponential failure backtracking) further, and in addition we consider
the general case of exponential backtracking. For this we introduce prioritized
NFA (pNFA), which make non-deterministic choices in an ordered manner, thus
prioritizing some over others in a way very reminiscent of parsing
expression grammars (PEGs). The latter introduce ordered choice to
the world of context-free grammars~\cite{ford:04}. An interesting algorithm
bridging the two areas is given in~\cite{Medeiros:11} by translating extended
regular expressions to PEGs.

By linking failure
backtracking with ambiguity in NFA, we show that catastrophic failure
backtracking can be decided in polynomial time, and in the case of
polynomial failure backtracking, the degree of the polynomial can be
determined in polynomial time. General backtracking is shown decidable
in EXPTIME by associating a tree transducer with the expression and
applying a result from~\cite{drewes:01}.




\section{Preliminaries}\label{prelim}
For a set $A$, we denote by $\P(A)$ the power set of $A$. The constant function $f\colon A\to B$ with $f(a)=b\in B$ for all $a\in A$ is denoted by $b^A$. Also, given any function $f\colon A\to B$ and elements $a\in A$, $b\in B$, we let $f_{a\mapsto b}$ denote the function $f'$ such that $f'(a)=b$ and $f'(x)=f(x)$ for all $x\in A\setminus\{a\}$.
The set of all strings (or sequences) over $A$ is denoted by $A^*$. In particular, it contains the empty string $\eps$. To avoid confusion, it is assumed that $\eps\notin A$. The length of a string $w$ is denoted by $|w|$, and the number of occurrences of $a\in A$ in $w$ is denoted by $|w|_a$. The union of disjoint sets $A$ and $B$ is denoted by $A\uplus B$.

As usual, a regular expression over an alphabet $\sig$ (where
$\varepsilon,\emptyset\notin\sig$) is either an element of
$\sig\cup\{\varepsilon,\emptyset\}$ or an expression of one of the
forms $(E\Or E')$, $(E\cdot E')$, or $(E^*)$, where $E$ and $E'$ are regular expressions. Parentheses can be
dropped using the rule that ${}^*$ (Kleene closure) takes precedence
over $\cdot$ (concatenation), which takes precedence over $\mid$
(union). Moreover, outermost parentheses can be dropped, and $E\cdot
E'$ can be written as $EE'$. The language $\L(E)$ denoted by a regular
expression is obtained by evaluating $E$ as usual, where $\emptyset$
stands for the empty language and $a\in\sig\cup\{\varepsilon\}$ for
$\{a\}$.

A \emph{tree} with labels in a set \sig\ is a function $t\colon V \to
\Sigma$, where $V\subseteq\posnat^*$ is a non-empty, finite set of vertices
(or nodes) which are such that (i) $V$ is prefix-closed, i.e., for all
$v\in\posnat^*$ and $i\in\posnat$, $vi\in V$ implies $v\in V$; and;
(ii) $V$ is closed to the left, i.e., for all $v\in\posnat^*$ and
$i\in\posnat$, $v(i+1)\in V$ implies $vi\in V$.

The vertex $\eps$ is the root of the tree and vertex $vi$ is the $i$th child of $v$.
We let $|t|=|V|$ denote the size of $t$. $t/v$ denotes the tree $t'$
with vertex set $V'=\{w\in\posnat^*\mid vw\in V\}$, where $t'(w) =
t(vw)$ for all $w\in V'$. If $V$ is not explicitly named, we may
denote it by $V(t)$.  The \emph{rank} of a tree $t$ is the maximum number of children of vertices of $t$.
Given trees $t_1,\dots,t_n$ and a symbol $\alpha$, we let $\alpha[t_1,\ldots,t_n]$ denote the
tree $t$ with $t(\eps) = \alpha$ and $t/i = t_i$ for all $i\in
\{1,\ldots,n\}$. The tree $\alpha[]$ may be abbreviated by $\alpha$.

Given an alphabet \sig, the set of all trees of the form $t\colon
V\to\sig$ is denoted by $T_\sig$. Moreover, if $Q$ is an alphabet
disjoint with \sig, we denote by $T_\sig(Q)$ the set of all trees
$t\colon V\to\sig\cup Q$ such that only leaves may be labeled with
symbols in $Q$, i.e., $t(v)\in Q$ implies that $v\cdot1\notin V$.


  A \emph{non-deterministic finite automaton} (NFA) is a tuple
  $A=(Q,\sig,q_0,\delta, F)$ where $Q$ is a finite set of
  states, $\sig$ is an alphabet with $\eps\notin\sig$, $q_0 \in Q$, $F\subseteq Q$ and
  $\delta\colon Q\times (\{\eps \}\cup \sig) \to \P(Q)$ is the
  transition function.
  The fact that $p\in\delta(q,\alpha)$ may also be denoted by
  $q\xrightarrow\alpha p$.
  
  A \emph{run} on a string $w\in\sig^*$ is a sequence
  $p_1\cdots p_{m+1} \in Q^*$ such that there exist
  $\alpha_1,\ldots,\alpha_m \in \sig \cup \{\eps\}$ with
  $\alpha_1\cdots \alpha_m = w$ and $p_i\xrightarrow{\alpha_i} p_{i+1}$ for all $i\in
  \{1,\ldots,m\}$. Such a run is \emph{accepting} if $p_1 = q_0$ and $p_m \in F$.
  The string $w$ is accepted by
  $A$ if and only if there exist an accepting run on $w$.   
  The set of strings in $\sig^*$ that are accepted by $A$ is denoted by $\L(A)$.

  A \emph{string-to-tree transducer} is a tuple $\stt =
  (Q,\sig,\Gamma,q_0,\delta)$, where $\sig$ and $\Gamma$ are the input
  and output alphabets respectively, $Q$ is the finite set of states, $q_0
  \in Q$ is the initial state, and $\delta\colon Q \times \Sigma \to
  T_{\Gamma}(Q)$ is the transition function. When $\delta(q,\alpha) =
  t$ we also write $q \xrightarrow{\alpha} t$.

  For $\alpha_1,\dots,\alpha_n\in \sig$,
  $\stt(\alpha_1\cdots\alpha_n)$ is the set of all trees $t\in
  T_\Gamma$ such that there exists a sequence of trees $t_0, \ldots,
  t_n$ which fulfill the requirement that $t_0 = q_0$ and $t_n=t$; and for
  every $i \in \{1,\ldots,n\}$, $t_i$ is obtained from $t_{i-1}$ by
  replacing every leaf $v$ for which $t_{i-1}(v)\in Q$ with a tree in
  $\delta(t_{i-1}(v),\alpha_i)$, i.e., it holds that
  $t_i/v\in\delta(t_{i-1}(v),\alpha_i)$.

\section{Regular Expression Matching in Java}

Here we will take a look at the algorithm used for matching regular
expressions in Java, using the default \texttt{java.util.regex}
package, and describe in pseudocode how matching is
accomplished in this package. The Java implementation is a good
representative of the class of NFA search matchers. It is both fairly
typical and very consistent across different versions (Java 1.6.0u27 is used to generate figures here).
Many other
implementations behave similarly, e.g.~the popular Perl
Compatible Regular Expressions library (PCRE). We try to capture the essence of the Java matching
procedure as accurately as possible while omitting details, add-ons, and tricks that are irrelevant for the
purpose of this paper.


Let us first describe the Java matcher in some detail.
Readers who are not interested in this description may skip ahead to
the second last paragraph before Algorithm~\ref{algo:java-match}.
The core of the matcher is implemented in
\texttt{java.lang.regex.Pattern}. Given a regular expression, it
constructs an object graph of subclasses of the class
\texttt{java.lang.regex.Pattern\$Node} (we briefly call it
\texttt{Node}, assuming all classes to be inner classes of
\texttt{java.lang.regex.Pattern} unless otherwise stated). \texttt{Node}
objects correspond to states, encapsulating their transitions in
addition, and have one relevant method, \texttt{boolean Node.match(Matcher m, int i, CharSequence s)},
which we will closely
mimic later.
The implicit \texttt{this} pointer corresponds to
the state, \texttt{s} is the entire string, \texttt{i} is the index of the
next symbol to be read. The argument \texttt{m} contains a variety of
book-keeping, notably variables corresponding to $C$ in
Algorithm~\ref{algo:java-match} below, as well as after-the-fact information regarding the
accepting run found. (In contrast, \texttt{match} returns \texttt{true}
if and only if the node can, potentially recursively, match
the remainder of the string).  Every \texttt{Node} contains at least a
pointer \texttt{next} which serves as the ``default'' next transition out
of the node. Let us look at the object graph on the left in
Figure~\ref{fig:java-re-1}.
\begin{figure}[htb]\small
  \vskip-.5em
  \centering
  \begin{tabular}{cc}
    \hskip-3em
    \begin{tikzpicture}
      \node[node distance=0,inner sep=0] (dummy) {};
\begin{scope}[>=latex']
  \node[draw,inner sep=0,right=of dummy] (object1) {\begin{tabular}{l} \textbf{Begin}\end{tabular}};
  \node[draw,inner sep=0,below=of object1] (object2) {\begin{tabular}{l} \textbf{Single}\\c=`a`\end{tabular}};
  \node[draw,inner sep=0,below=of object2,xshift=2.5em] (object3) {\begin{tabular}{l} \textbf{Curly}\\type=0\\cmin=0\\cmax=2147483647\end{tabular}};
  \node[draw,inner sep=0,above=of object3,xshift=2.5em] (object4) {\begin{tabular}{l} \textbf{Single}\\c=`b`\end{tabular}};
  \node[draw,inner sep=0,right=of object3] (object6) {\begin{tabular}{l} \textbf{LastNode}\end{tabular}};
  \node[draw,inner sep=0] (object5) at (object4 -| object6) {\begin{tabular}{l} \textbf{Accept}\\next=null\end{tabular}};
  \draw[->] (object1) edge node[left] {next} (object2);
  \draw[->] (object2) edge node[left] {next} (object2 |- object3.north);
  \draw[<-] (object4) edge node[right] {atom} (object4 |- object3.north);
  \draw[->] (object3) edge node[above] {next} (object6);
  \draw[->] (object4) edge node[above] {next} (object5);
  \draw[->] (object6) edge node[right] {next} (object5);
\end{scope}

    \end{tikzpicture} & \hskip-3em
    \begin{tikzpicture}
      \node[node distance=0,inner sep=0] (dummy) {};
\begin{scope}[>=latex']
  \node[draw,inner sep=0,right=of dummy] (object1) {\begin{tabular}{l} \textbf{Begin}\end{tabular}};
  \node[draw,inner sep=0,right=of object1] (object2) {\begin{tabular}{l} \textbf{Prolog}\end{tabular}};
  \node[draw,inner sep=0,below=of object2] (object3) {\begin{tabular}{l} \textbf{Loop}\\countIndex=1\\beginIndex=0\\cmin=0\\cmax=2147483647\end{tabular}};
  \node[draw,inner sep=0,right=of object3,yshift=-5em,xshift=2em] (object5) {\begin{tabular}{l} \textbf{Branch}\\size=2\end{tabular}};
  \node[draw,inner sep=0,above=of object5] (object6) {\begin{tabular}{l} \textbf{array}\end{tabular}};
  \node[draw,inner sep=0,above=of object6,xshift=-2.5em] (object7) {\begin{tabular}{l} \textbf{Single}\\c=`a`\end{tabular}};
  \node[draw,inner sep=0,above=of object6,xshift=2.5em] (object11) {\begin{tabular}{l} \textbf{Single}\\c=`b`\end{tabular}};
  \node[draw,inner sep=0,below=of object3] (object15) {\begin{tabular}{l} \textbf{LastNode}\end{tabular}};
  \draw[->] (object1) edge node[above] {next} (object2);
  \draw[->] (object2) edge node[left] {loop} (object3);
  \draw[->] (object3) edge[in=180,out=0] node[below left,near end] {body} (object5);
  \draw[->] (object3) edge node[left] {next} (object15);
  \draw[->] (object5) edge node[right] {atoms} (object6);
  \draw[->] (object7) edge[out=180,in=40] node[above,near start] {next}
  ($(object3.north west) !0.8! (object3.north east)$);
  \draw[->] (object11) edge[out=90,looseness=0.6,in=70] node[above]
  {next} ($(object3.north west) !0.6! (object3.north east)$);
  \draw[->] (object6) edge node[left]{element 1} (object7);
  \draw[->] (object6) edge node[right]{element 2} (object11);
\end{scope}

    \end{tikzpicture}
  \end{tabular}
  \caption{The left diagram shows essentially the complete internal
    object graph (i.e.\ internal data-structure) of subclasses of {\tt
      Node} Java constructs for
     $ab^*$. On the right we
    show a simplified version of the corresponding object graph for $(a|b)^*$. In the latter
    all nodes without matching effect (in our limited
    expressions) are removed (e.g.\ the node \textbf{Accept} seen in the more complete
    example on the left).}
  \label{fig:java-re-1}
\end{figure}
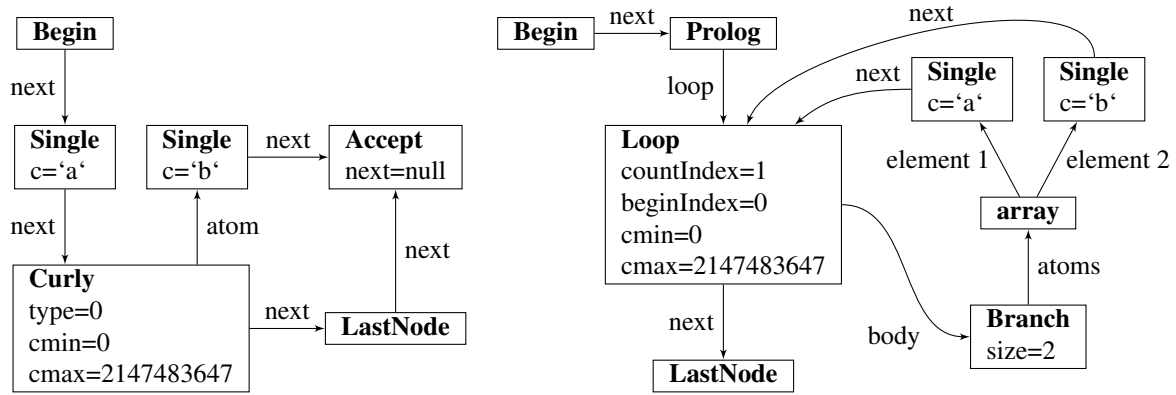
There are quite a few nodes even for a small expression
like $ab^*$, but most are needed for fairly minor book-keeping, and for features
we are not concerned with here. For
example \textbf{LastNode} checks that all symbols are read by the matching, but
can be made to do other things using additional features in {\tt
  java.util.regex} which we do not deal with.

The matching starts with a call to \texttt{match} on \textbf{Begin} with the full
string (i.e., $\texttt{i}$ set to one and the string in $\texttt{s}$). See
Figure~\ref{fig:java-regex-pseudo-code-1} for pseudo-code for the
behavior of \textbf{Begin}, \textbf{Single} and \textbf{Curly}.
\begin{figure}[htb]
  \begin{center}
  \begin{tabular}{c@{\;}c@{\;}c}
    \begin{minipage}{0.3\textwidth}
\begin{algorithmic}[1]
  \small
  \If {$i=0$}
  \State \Return \texttt{next.match}$(i,w)$
  \Else
  \State \Return \false
  \EndIf
\end{algorithmic}
    \end{minipage} & 
    \begin{minipage}{0.34\textwidth}
\begin{algorithmic}[1]
  \small
  \If {$\alpha_i = \texttt{c}$}
  \State \Return \texttt{next.match}$(i+1,w)$
  \Else
  \State \Return \false
  \EndIf
\end{algorithmic}
    \end{minipage} & 
    \begin{minipage}{0.36\textwidth}
\begin{algorithmic}[1]
  \small
  \If {$\texttt{atom.match}(i,w)$}
  \State $b \mathrel{:=} \text{\#symbols read above}$
  \If {$\texttt{this.match}(i+b,w)$}
  \State \Return true
  \EndIf
  \EndIf
  \State \Return $\texttt{next.match}(i,w)$
\end{algorithmic}
\end{minipage}
\end{tabular}
  \end{center}
  \vskip-.5em
  \caption{The code for a call of the form
    $\texttt{match}(i,w=\alpha_1\cdots \alpha_n)$ on a \textbf{Begin} (left),
    \textbf{Single} (middle) and \textbf{Curly} (right) node. \textbf{Single} has a member
    variable \texttt{c} identifying the symbol it should read. \textbf{Curly}
    tries to recursively repeat \texttt{atom}, calling \texttt{next} when
    that fails.}
  \label{fig:java-regex-pseudo-code-1}
\end{figure}
\textbf{Begin} (and \textbf{LastNode}) are trivial, they just check that we are in the
expected position of the string, and in the case of \textbf{Begin} calls
its \texttt{next}. \textbf{Single} reads a single symbol (equal to its internal variable
\texttt{c}) and continues to \texttt{next}. \textbf{Accept} is even more trivial and
always returns true. 
\textbf{Curly}
handles the Kleene closure, and, since it has to resolve
non-determinism (i.e. how many repetitions to perform), it is a bit more
complex. The values \texttt{type}, \texttt{cmin}, and \texttt{cmax} are
irrelevant for our concerns, they implement the counted repetition 
extension.
Note that that line~2 in the right-most code snippet in
Figure~\ref{fig:java-regex-pseudo-code-1} works by updating values in
the ``\texttt{m}'' in-out argument, but we leave that unspecific here.
\textbf{Curly}
starts by trying to match the atom node \texttt{atom} to a prefix of the
string. If it succeeds it calls itself recursively, calling match on
\texttt{this}, to process the remainder. When \texttt{atom} fails to match any
further, \textbf{Curly} instead continues to \texttt{next} (backtracking as
needed). In reality \textbf{Curly} uses imperative loops for efficiency, but it
only serves to achieve a constant speedup and is as such irrelevant
for us. \textbf{Curly} is not used for all Kleene closures, if $b=0$ it would
loop forever, so the construction procedure for the object graph only
uses \textbf{Curly} when it (with a fairly limited decision procedure) can tell
that the contents looped is of constant non-zero length.

Next we look at the more general example on the right side of
Figure~\ref{fig:java-re-1}.  Here there are some additional nodes to
consider. \textbf{Branch} implements the union, and \textbf{Prolog} and \textbf{Loop}
implement the Kleene closure (with \textbf{Prolog} calling {\tt
  matchInit} on \textbf{Loop} to initialize the loop). Let us look at each
function in Figure~\ref{fig:java-regex-pseudo-code-2}.
\begin{figure}
  \vskip-.5em
  \begin{center}
  \begin{tabular}{c@{\;}c@{\;}c}
    \begin{minipage}{0.33\textwidth}
\begin{algorithmic}[1]
  \small
  \State {$\texttt{this.sp} \mathrel{:=} i$}
  \If {\texttt{body.match}$(i,w)$}
  \State \Return true
  \Else
  \State \Return \texttt{next.match}$(i,w)$
  \EndIf
\end{algorithmic}
    \end{minipage} & 
    \begin{minipage}{0.33\textwidth}
\begin{algorithmic}[1]
  \small
  \If {$i > \texttt{this.sp}$}
  \If {\texttt{body.match}$(i,w)$}
  \State \Return true
  \EndIf
  \EndIf
  \State \Return {\texttt{next.match}$(i,w)$}
\end{algorithmic}
    \end{minipage} & 
    \begin{minipage}{0.33\textwidth}
\begin{algorithmic}[1]
  \small
  \For {$e \in \texttt{array}$}
  \If {$e$\texttt{.match}$(i,w)$}
  \State \Return true
  \EndIf
  \EndFor
  \State \Return \false
\end{algorithmic}
\end{minipage}
\end{tabular}
  \end{center}
  \vskip-.25em
  \caption{The code for a call of the form
    $\texttt{match}(i,w=\alpha_1\cdots \alpha_n)$ on a \textbf{Loop}
    (middle), \textbf{Branch} (right) and, as a special case, the call
    $\texttt{matchInit}(i,w)$ on \textbf{Loop} on the
    left. \texttt{matchInit} is called by \textbf{Prolog} in lieu of
    \texttt{match}. Notice that the loop in \textbf{Branch} is
    \emph{in array order}.}
  \label{fig:java-regex-pseudo-code-2}
\end{figure}
In \textbf{Branch}, \texttt{match} starts by letting the first
subexpression match, continuing with the second and so on if the first
attempts fail. The symbiotic relationship between \textbf{Prolog} and \textbf{Loop} is
trickier. Where all other nodes calls into \textbf{Loop} with ${\tt
  match}(i,w)$ as usual \textbf{Prolog} calls in with $\texttt{matchInit}(i,w)$
(on the left in Figure~\ref{fig:java-regex-pseudo-code-2}). This
serves only one purpose: it eliminates $\eps$-cycles. That is, it
prevents \textbf{Loop} from recursively matching $\texttt{body}$ to the empty
string, making no progress. In $\texttt{matchInit}$ the current value of
$i$ is stored, and in $\texttt{match}$ (in the middle in
Figure~\ref{fig:java-regex-pseudo-code-2}) an attempt to match $\texttt{body}$ will
only be made if at least one symbol has been read since
the last attempt.

As an additional example, consider the regular expression $(a|a)^*$,
which has an object graph almost like on the right of
Figure~\ref{fig:java-re-1}, except the second \textbf{Single} \emph{also} has
\texttt{c} set to $a$. Matching this against $aa\cdots ab$ will take
exponential time in the number of $a$s, as all ways to match each $a$
to each \textbf{Single} in $a|a$ will be tried as the matching backtracks
trying to match the final $b$. In an experiment on one of the authors'
desktop PCs an attempt to match $(a|a)^*$ to $a^{35}b$ using Java took
roughly an hour of CPU time.

The object graph on the right in Figure~\ref{fig:java-re-1} is, as is
noted in the caption, a bit of creative editing of reality. A number
of nodes not affecting the search behavior or matching are removed:
the \textbf{Accept} node, which is just a \texttt{next} placeholder with no
effect, \textbf{GroupHead} and \textbf{GroupTail}, which tracks what part of the match
corresponds to a parenthesized subexpression, and finally \textbf{BranchConn},
which is placed in relation to \textbf{Branch} in the right of
Figure~\ref{fig:java-re-1} and records some information for the
optimizer.

In general all nodes have numerous additional features not discussed,
and there are many additional nodes serving similar purposes. For
example \textbf{Single} may be replaced with \textbf{Slice} to match multiple symbols at
once or \textbf{BnM} to matche multiple symbols using Boyer-Moore
matching~\cite{boyer1977fast}. However, the optimizations are too
minor to matter for our concerns (e.g.~using \textbf{Slice}
and \textbf{BnM} instead of \textbf{Single} yields at most a linear speed-up), and the
additional features are outside our scope.


Let us take the above together and assemble the snippets
of code into a function which takes a regular expression and
a string as input and decides if the expression matches the string. A
regular expression is represented by its parse tree, $T\colon
\posnat^* \to \{|, {}^{*}, {}^{*?}, \cdot,\epsilon\} \cup \Sigma$, defined in the
obvious way with each $\cdot$ and $|$ having two children,
${}^*$ and ${}^{*?}$ one, and $\alpha \in \sig\cup\{\epsilon\}$ zero.
The operator ${}^{*?}$ is the lazy Kleene closure, which is the same as ${}^*$,
except that it attempts to make as \emph{few} repetitions as possible.

We now define a function $\nxt : \posnat^* \to \posnat^*\uplus\{\nil\}$ on the nodes of $T$, where $\nil$ is a special value.
Roughly speaking, $\nxt(v)$ is the node at which parsing continues when the subexpression
rooted av $v$ has successfully matched (compare to the \texttt{cont} pointers in Kirrage at
al.~\cite{kirrage:13}). Let $\nxt(\eps)=\nil$, and 
\begin{enumerate}
\item if $T(v)=|$ then $\nxt(v\cdot 1)=\nxt(v\cdot 2)=\nxt(v)$;
\item if $T(v)=\cdot$ then $\nxt(v\cdot 1)=v\cdot 2$ and $\nxt(v\cdot 2)=\nxt(v)$; and
\item if $T(v)={}^{*}$ or $T(v) = {}^{*?}$ then $\nxt(v\cdot 1)=v$.
\end{enumerate}
Then, collapsing the object graph and ignoring precise node choices in
Java we get Algorithm~\ref{algo:java-match}.
\begin{algorithm}
  \label{algo:java-match}
  Simplified pseudocode of the Java matching algorithm. The
  implicit regular expression parse tree is $T$. The
  call-by-value input parameters are the node of $T$ currently
  processed, the remainder of the input string, and a set of
  nodes that we should not revisit before consuming the next input symbol.
  This prevents $\eps$-cycles as discussed above. The initial call made is
  $\match(\eps, w, \emptyset)$.
  \begin{multicols}{2}
  \begin{algorithmic}[1]
    \small
  \Function{\match}{$v,w=a_1\cdots a_n,C$} 
    \If {$v=\nil$}
      \State {\Return $n=0$} 
    \ElsIf {$T(v)=\epsilon$}
      \State {\Return \textsc{Match}($\nxt(v),w,C$)}
    \ElsIf {$T(v)\in\Sigma$}
      \If {$n\ge 1 \wedge T(v)=a_1$}
        \State{\Return \textsc{Match}($\nxt(v),a_2\cdots
          a_n,\emptyset$)}
        \EndIf
        \State{\Return \false}
    \ElsIf {$T(v)=|$}
      \If {$\match(v\cdot 1,w,C)$}
        \State{\Return \true}
      \EndIf
        \State{\Return $\match(v\cdot 2,w,C)$}
    \ElsIf {$T(v)=\cdot$}
      \State {\Return $\match(v\cdot 1,w,C)$}
    \ElsIf {$T(v)={}^*$}
      \If {$v\cdot 1 \notin C$}
        \If {$\match(v\cdot 1,w,C\cup\{v \cdot 1\})$}
          \State {\Return \true}
        \EndIf
      \EndIf
        \State {\Return $\match(\nxt(v),w,C)$}
    \ElsIf {$T(v)={}^{*?}$}
      \If {$\match(\nxt(v),w,C)$}
        \State {\Return \true}
      \ElsIf {$v\cdot 1 \notin C$}
        \State {\Return $\match(v\cdot 1,w,C\cup\{v\cdot 1\})$}
      \Else
        \State {\Return \false}
      \EndIf
    \EndIf
    \EndFunction
  \end{algorithmic}
  \end{multicols}

\end{algorithm}
Notice how the code for the two Kleene closure variants \emph{only}
differs in what they try first: $\mathord{{}^*}$ tries to repeat its
body first, whereas $\mathord{{}^{*?}}$ tries to not repeat the body.
Note also how $C$ is used to prevent $\eps$-cycles in
lines 19--20 and 28--29. If the node we \emph{would}
go to is already in $C$ this means that no symbol has been read since
last time we tried this, meaning repeating it would be a loop without
progress.

\section{Prioritized Non-Deterministic Finite Automata}
We now define a modified type of NFA that provides us
with an abstract view of the matching procedure discussed in the
previous section. The modifications have no impact on the
language accepted, but make the automaton ``run deterministic''.
Every string in the language accepted has a unique
accepting run, a property brought about by ordering the
non-deterministic choices into a first, second, etc
alternative, and letting the unique accepting run be given by trying,
at any given state, alternative $i+1$ only when alternative $i$ has failed.
In our definition, only $\eps$-transitions can be nondeterministic.

\begin{definition} 
  \label{defn:pNFA}%
  A \emph{prioritized non-deterministic finite automaton} (\emph{pNFA}) is a tuple $A=(Q_1, Q_2, \sig, q_0,
  \delta_1, \delta_2,\allowbreak F)$, where $Q_1$ and $Q_2$ are disjoint finite sets of states; $\sig$ is a finite alphabet; $q_0 \in Q_1 \cup Q_2$
  is the initial state; $\delta_1\colon Q_1 \times \sig \to (Q_1 \cup
  Q_2)$ is the deterministic transition function; $\delta_2\colon Q_2
  \to (Q_1 \cup Q_2)^*$ is the non-deterministic prioritized
  transition function; and $F \subseteq Q_1 \cup Q_2$ are the final
  states.
  
  The NFA corresponding to the pNFA $A$ is given by $\nfa A = (Q_1\cup Q_2,\sig, q_0, \bar{\delta},F)$, where
  \[
  \bar{\delta}(q,\alpha) = \left\{\begin{array}{cl}
      \{\delta_1(q,\alpha)\} & \text{if $q\in Q_1$ and $\alpha\in
        \sig$,}\\
      \{q_1,\dots,q_n\} & \text{if $q\in Q_2$, $\alpha=\eps$, and
        $\delta_2(q) = q_1\cdots q_n$.}
      \end{array}\right.
    \]
    The \emph{language accepted by $A$}, denoted by $\L(A)$, is $\L(\nfa A)$.
\end{definition}

Next, we define the so-called backtracking run of a pNFA on an input string $w$. This run takes the form of a tree which, intuitively, represents the attempts a matching algorithm such as Algorithm~\ref{algo:java-match} would make until accepting the input string (or eventually rejecting it). The definition makes use of a parameter $C$ whose purpose is to remember, for every state, the highest nondeterministic alternative that has been tried since the last symbol was consumed. This corresponds to the parameter $C$ in Algorithm~\ref{algo:java-match} and avoids infinite runs caused by $\eps$-cycles.

\begin{definition}  
  \label{defn:btr}
  Let $A=(Q_1,Q_2,\sig,q_0,\delta_1,\delta_2,F)$ be a pNFA, $q\in
  Q_1\cup Q_2$, $w=\alpha_1\cdots
  \alpha_n\in\sig^*$, and $C\colon Q_2\to\nat$. Then the \emph{$(q,w,C)$-backtracking run} of $A$ is
  a tree over $Q_1\cup Q_2 \uplus \{\acc,\rej\}$. It \emph{succeeds} if and only if
  $\acc$ occurs in it.    We denote the $(q,w,C)$-backtracking run by
  $\btr_A(q,w,C)$ and inductively define it as follows. If $q\in F$ and $w=\eps$ then $\btr_A(q,w,C) = q[\acc]$. Otherwise, we distinguish between two cases:\footnote{For the first case, recall that $0^{Q_2}$ denotes the function $C\colon Q_2\to\nat$ such that $C(q)=0$ for all $q\in Q_2$.}
\begin{enumerate}
\item If $q\in Q_1$, then
  \[
  \btr_A(q,w,C)=\left\{\begin{array}{lll}
      q[\btr_A(\delta_1(q,\alpha_1),\alpha_2\cdots \alpha_n,0^{Q_2})]
      & \text{if $n>0$ and $\delta_1(q,\alpha_1)$ is defined,}\\
      q[\rej] & \text{otherwise.}
    \end{array}\right.
  \]
\item If $q\in Q_2$ with $\delta_2(q)=q_1\cdots q_k$, let $i_0=C(q)+1$ and
  $r_i=\btr_A(q_i,w,C_{q\mapsto i})$ for $i_0\le i \le k$. Then
  \[\btr_A(q,w,C)=\left\{\begin{array}{ll}
      q[\rej] & \text{if $i_0>k$,}\\
      q[r_{i_0},\ldots, r_k] & \text{if $i_0\le k$ but no $r_i$ ($i_0\le i \le k$) succeeds,}\\
      q[r_{i_0},\ldots, r_i] & \text{if $i\in\{i_0,\dots,k\}$ is the least index such that
        $r_i$ succeeds.}
\end{array}\right.
\]
\end{enumerate}
The \emph{backtracking run of $A$ on $w$} is
$\btr_A(w)=\btr_A(q_0,w,0^{Q_2})$. If $\btr_A(w)$ succeeds, then the
\emph{accepting run of $A$ on $w$} is the sequence of states on the
right-most path in $\btr_A(w)$.
\end{definition}

Notice that the third parameter $C$ in $\btr_A(q,w,C)$ fulfills a similar purpose as the set $C$ in Algorithm~\ref{algo:java-match}. It is used to track transitions that must not be revisited to avoid cycles.

Clearly, for a pNFA $A$ and a string $w$, $w\in\L(A)$ if and only if $\btr_A(w)$ succeeds, if and only if the accepting run of $A$ on $w$ is an accepting run of the NFA $\nfa A$. Backtracking runs capture the behavior of the following algorithm which generalizes Algorithm~\ref{algo:java-match} to arbitrary pNFAs to deterministically find the accepting run of $A$ on $w$ if it exists.

%

\begin{algorithm}  
  \label{algo:match}%
  Let $A=(Q_1,Q_2,\sig,q_0,\delta_1,\delta_2,F)$ be a pNFA. The call
  $\textsc{Match}(q_0,w,0^{Q_2})$ of the following procedure yields the accepting
  run of $A$ on $w$ if it exists, and $\bot \notin Q_1 \cup Q_2$ otherwise. The third parameter
  is similar to the $C$ in Definition~\ref{defn:btr}. For every state $q\in Q_2$ with
  out-degree $d$ we have $C(q)\in\{0,\dots,d\}$.
  \begin{multicols}{2}
  \begin{algorithmic}[1]
    \small
    \Function{Match}{$q$,$w=a_1\cdots a_n$,$C$}
    \If { $q\in Q_1$ }
    \If { $n=0$ }
    \If {$q\in F$}
    \State \Return $q$
    \Else
    \State \Return $\bot$ 
    \EndIf
    \Else
    \State \Return $q \cdot \textsc{Match}(\delta_1(q,a_1),a_2\cdots
    a_n,0^{Q_2})$ 
    \EndIf
    \Else 
    \If { $n=0\wedge q\in F$} 
    \State \Return q  
    \Else
    \State $q_1\cdots q_k \mathrel{:=} \delta_2(q)$
    \For {$i=C(q)+1,\dots,k$}
      \State $r \mathrel{:=}\textsc{Match}(q_i,w,C_{q\mapsto i})$
      \If {$r\neq\bot$}
        \State \Return $q\cdot r$ 
      \EndIf
    \EndFor
    \State \Return $\bot$ 
    \EndIf
    \EndIf
    \EndFunction
  \end{algorithmic}
  \end{multicols}
  \noindent 
  Notice especially line 10 where a symbol is read and $C$ is reset to
  $0^{Q_2}$ in the recursive call. The case for $Q_2$ starts at line
  13, the loop at 17 tries all \emph{not yet tried} transitions for
  that state. If no transition succeeds we fail on line 23.
\end{algorithm}

We note here that the running time of Algorithm~\ref{algo:match} is exponential in general, just like Algorithm~\ref{algo:java-match}. This can be remedied by means of memoization, but potentially with a significant memory overhead, due to the fact that memoization needs to keep track of each possible assignment to all $C(q)$ with $q\in Q_2$.\footnote{Apparently, starting from version~5.6, Perl uses memoization in its regular expression engine in order to speed up matching.}

Depending on how one turns a given regular expression into a pNFA, Algorithm~\ref{algo:match} will run more or less efficiently. For example, if the pNFA is built in a way that reflects Algorithm~\ref{algo:java-match}, analyzing the efficiency of Algorithm~\ref{algo:match} or, equivalently, the size of backtracking runs, yields a (somewhat idealized) statement about the efficiency of the Java matcher.
  
\subsection{Two Constructions for Turning Regular Expressions into pNFA}

In this section we give two examples of constructions that can be used to turn a regular expression $E$ into a pNFA $A$ such that $\L(A)=\L(E)$. The first is a prioritized version of the classical Thompson construction \cite{Thompson:68}, whereas the second follows the Java approach. 

Recall that the classical Thompson construction converts the parse tree $T$ of a regular expression $E$ to an NFA, which we denote by $\Th(E)$, by doing a postorder traversal on $T$. An NFA is constructed for each subtree $T'$ of $T$, equivalent to the regular expression represented by $T'$. We do not repeat this well-known construction here, assuming that the reader is familiar with it. Instead, we define a prioritized version, which constructs a pNFA denoted by $\Thp(E)$ such that $\nfa{\Thp(E)}=\Th(E)$.

Just as the construction for $\Th(E)$, we define $\Thp(E)$ recursively on the parse tree for $E$. For each subexpression $F$ of $E$, 
$\Thp(F)$ has a single initial state with no ingoing transitions, and a single final state with no outgoing transitions.
The constructions of $\Thp(\emptyset)$, $\Thp(\eps)$, $\Thp(a)$, and $\Thp(F_1\cdot F_2)$, given that $\Thp(F_1)$ and $\Thp(F_2)$ are already constructed, are defined as for $\Th(E)$, splitting the state set into $Q_1$ and $Q_2$ in the obvious way. It is only when we construct $\Thp(F_1 | F_2)$ from $\Thp(F_1)$ and $\Thp(F_2)$, and $\Thp(F_1^*)$ from $\Th(F_1)$, where the priorities of introduced $\eps$-transitions require attention. We also consider the lazy Kleene closure $F_1^{*?}$, to illustrate the difference in priorities of transitions between constructions for the greedy and lazy Kleene closure.
In each of the constructions below, we assume that $\Thp(F_i)$ ($i\in\{1,2\}$) has the initial state $q_i$ and the final state $f_i$. Furthermore, $\delta_2$ denotes the transition function for $\eps$-transitions in the 
newly constructed pNFA $\Thp(E)$. All non-final states in $\Thp(E)$ that are in $\Thp(F_i)$ inherit their outgoing transitions from $\Thp(F_i)$.

\begin{itemize}
\item If $E = F_1|F_2$ then $\Thp(E)$ is built like $\Th(E)$, thus introducing new initial and final states $q_0$ and $f_0$, respectively, and defining $\delta_2(q_0)=q_1q_2$ and $\delta_2(f_1)=\delta_2(f_2)=f_0$.
\item If $E = F_1^*$ then we add new initial and final states $q_0$
  and $f_0$ to $Q_2$ and define  $\delta_2(q_0)=q_1f_0$ and
  $\delta_2(f_1)=q_1f_0$. The case $E=F_1^{*?}$ is the same, except
  that $\delta_2(q_0)=f_0q_1$ and $\delta_2(f_1)=f_0q_1$.
\end{itemize}

Thus, the pNFA $\Thp(F^*)$ tries $F$ as often as possible whereas $\Thp(F^{*?})$ does the opposite.

The second pNFA construction is the one implicit in the Java approach and Algorithm~\ref{algo:java-match}. We denote this pNFA by $\Jp(E)$. The base cases $\Jp(\emptyset)$, $\Jp(\eps)$, $\Jp(a)$ are identical to $\Thp(\emptyset)$, $\Thp(\eps)$, $\Thp(a)$, respectively. Now, let us consider the remaining operators. Again, we assume that $\Jp(F_i)$ ($i\in\{1,2\}$) has the initial state $q_i$ and the final state $f_i$. Furthermore, $\delta_2$ denotes the transition function for $\eps$-transitions in the 
newly constructed pNFA $\Jp(E)$.

\begin{itemize}
\item Assume that $E = F_1\cdot F_2$. Then $\Jp(E)$ is built from $\Jp(F_1)$ and $\Jp(F_2)$ by identifying $f_1$ with $q_2$, adding a new initial state $q_0\in Q_2$ with $\delta_2(q_0)=q_1$, and making $f_2$ the final state. Thus, $\Jp(E)$ is built like $\Thp(E)$, except that a new initial state is added and connected to the initial state of $\Jp(F_1)$ by means of an $\eps$-transition.
\item If $E = F_1|F_2$ then $\Jp(E)$ is constructed by introducing a new initial state $q_0$, defining $\delta_2(q_0)=q_1q_2$, and identifying $f_1$ and $f_2$, the result of which becomes the new final state.
\item Now assume that $E = F_1^*$. Then we add a new final state $f_0$
  to $\Jp(F_1)$, make $q_0=f_1$ the initial state of $J^P(E)$, and set
  $\delta_2(q_0)=q_1f_0$. The case $E = F_1^{*?}$ is exactly the same, except that $\delta_2(q_0)=f_0q_1$.
\end{itemize}

\begin{observation}
\label{obs:btr-eqv-runt}
Let $E$ be a regular expression and $A$ a pNFA. Then the running time of Algorithm~\ref{algo:match} on $w$ (with respect to $E$) is $\Theta(|\btr_A(w)|)$.
\end{observation}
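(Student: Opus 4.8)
The plan is to prove this by establishing a close correspondence between the recursion tree of Algorithm~\ref{algo:match} and the tree $\btr_A(w)$, and then arguing that the work done per node is bounded by a constant (given $E$, hence given $A$ as a fixed object). First I would observe that the parameters of the recursive calls made by \textsc{Match} are exactly the triples $(q,w',C)$ that index subtrees $\btr_A(q,w',C)$ occurring inside $\btr_A(w)$: the top-level call is $\textsc{Match}(q_0,w,0^{Q_2})$, matching the root $\btr_A(q_0,w,0^{Q_2})=\btr_A(w)$; and the recursive structure of \textsc{Match} mirrors the case distinction of Definition~\ref{defn:btr} line by line. In the $Q_1$ branch, \textsc{Match} either returns immediately (corresponding to the leaves $q[\acc]$ or $q[\rej]$) or makes a single recursive call on $(\delta_1(q,a_1),a_2\cdots a_n,0^{Q_2})$, which is exactly the unique child in case~(1) of the definition. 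In the $Q_2$ branch, the loop over $i=C(q)+1,\dots,k$ issues calls $\textsc{Match}(q_i,w,C_{q\mapsto i})$, which are precisely the $r_i$ of case~(2); the loop stops early at the least $i$ for which the call succeeds, matching the third sub-case $q[r_{i_0},\dots,r_i]$, and runs to the end otherwise, matching the first two sub-cases.

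Next I would make the correspondence precise by induction on the structure of $\btr_A(q,w',C)$ (equivalently, on the depth of recursion, which is finite because $C$ blocks $\eps$-cycles): the multiset of (non-trivial) recursive invocations of \textsc{Match} spawned directly by the call $\textsc{Match}(q,w',C)$ is in bijection with the children $v$ of the root of $\btr_A(q,w',C)$ that are not labeled $\acc$ or $\rej$, and each such invocation computes $\btr_A$ on the label-triple of the corresponding subtree. This yields a bijection between the nodes of the call tree of $\textsc{Match}(q_0,w,0^{Q_2})$ and the nodes of $\btr_A(w)$, except possibly for the $\acc$/$\rej$ leaves, which are accounted for as terminal base-case returns inside their parent call rather than as separate invocations; since each internal node of $\btr_A(w)$ has at least one such leaf among its descendants only at the very bottom, this discrepancy is at most a constant factor, so the number of calls is $\Theta(|\btr_A(w)|)$.

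Then I would bound the running time of a single invocation of \textsc{Match}, excluding the time spent inside nested invocations. Each call does a bounded number of tests ($q\in Q_1$?, $n=0$?, $q\in F$?), at most one lookup of $\delta_1(q,a_1)$ or $\delta_2(q)$, builds the modified function $C_{q\mapsto i}$ or $0^{Q_2}$, and does one list-cons $q\cdot r$. Since $A=\Thp(E)$ or $A=\Jp(E)$ is fixed once $E$ is fixed, $|Q_1\cup Q_2|$ and the maximal out-degree $k=\max_{q\in Q_2}|\delta_2(q)|$ are constants, so representing $C$ as a fixed-size array and copying/updating it costs $O(1)$ with respect to the input, as does the cons operation and reading $a_1$. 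Hence each invocation contributes $\Theta(1)$ work of its own, and the loop in the $Q_2$ case performs at most $k=O(1)$ iterations before either recursing or terminating. Summing $\Theta(1)$ over the $\Theta(|\btr_A(w)|)$ invocations gives total running time $\Theta(|\btr_A(w)|)$.

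The main obstacle I anticipate is being careful about what "running time" and "size of $\btr_A(w)$" mean uniformly: one must fix a cost model in which manipulating $C$ and the partial run-sequence is $O(1)$ (justified because these have size bounded in terms of $E$, not $w$), and one must make sure the $\acc$ and $\rej$ leaves of $\btr_A(w)$ are counted consistently on both sides of the $\Theta$ — they correspond to the early-return branches of \textsc{Match} rather than to fresh recursive calls, so they inflate $|\btr_A(w)|$ by at most a bounded factor relative to the number of invocations and do not affect the asymptotics. A minor secondary point is the $\Omega$ direction: since every invocation of \textsc{Match} corresponds to a distinct node of $\btr_A(w)$ and does at least constant work, the running time is also $\Omega(|\btr_A(w)|)$, giving the tight bound.
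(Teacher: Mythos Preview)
Your proposal is correct. The paper states this as an \emph{Observation} and gives no proof at all, so there is nothing to compare against; your argument is exactly the kind of routine structural induction one would write out if asked to justify the claim formally, and it handles the minor bookkeeping (the $\acc$/$\rej$ leaves, the cost model for $C$, and the $\Omega$ direction) appropriately.
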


The two variants of implementing regular expressions by pNFA are closely related. In fact, Kirrage et al.~\cite{kirrage:13} seem to regard them as being essentially identical and write that their reasons for choosing $\nfa{\Jp(E)}$ are ``purely of presentational nature''. However, using our notion of pNFA we can show that this is not always the case. For this, note first that the construction of both $\Thp(E)$ and $\Jp(E)$ can be viewed in a top-down fashion, where each operation is represented by an abstract pNFA in which zero, one, or two transitions are labeled with regular expressions. Replacing such a transition with the corresponding pNFA yields the constructed pNFA for the whole expression. Figure~\ref{fig:building blocks} shows the building blocks for the operations $\cdot$, $|$, ${}^*$, and ${}^{*?}$ in both cases. Priorities follow the convention that $\eps$-transitions leaving a state are drawn in clockwise order, starting at noon. Unlabeled edges denote $\eps$-transitions.%
\begin{figure}
\vskip-0.5em
\centering
\begin{tabular}{cccc}
\begin{tikzpicture}
  \node (i) at (0,0) {};
  \draw[white] (0,-1cm) -- (0,1cm);
  \begin{scope}[node distance=1.8em,every node/.style={draw,circle,inner sep=2.5pt}]
    \node[right=of i,xshift=-1em] (lp) {};
    \node[right=of lp] (s) {};
    \node[right=of s] (sp) {};
    \draw (sp) circle (2pt);
  \end{scope}
  \begin{scope}[->,>=latex']
    \draw (i) edge (lp);
    \draw (lp) edge node[above] {$E_1$} (s);
    \draw (s) edge node[above] {$E_2$} (sp);
  \end{scope}
\end{tikzpicture}
&
\begin{tikzpicture}
  \node (i) at (0,0) {};
  \draw[white] (0,-1cm) -- (0,1cm);
  \tikzset{close/.style={node distance=1.8em}}
  \begin{scope}[node distance=1.8em,every node/.style={draw,circle,inner sep=2.5pt}]
    \node[right=of i,xshift=-1em] (lp) {};
    \node[close,above right=of lp] (s) {};
    \node[close,below right=of lp] (s2) {};
    \node[right=of s] (p) {};
    \node[right=of s2] (p2) {};
    \node[close,below right=of p] (e) {};
    \draw (e) circle (2pt);
  \end{scope}
  \begin{scope}[->,>=latex']
    \draw (i) edge (lp);
    \draw (lp) edge (s);
    \draw (lp) edge (s2);
    \draw (s) edge node[above] {$E_1$} (p);
    \draw (s2) edge node[above] {$E_2$} (p2);
    \draw (p) edge (e);
    \draw (p2) edge (e);
  \end{scope}
\end{tikzpicture} &
\begin{tikzpicture}
  \tikzset{close/.style={node distance=1.8em}}
  \node (i) at (0,0) {};
  \draw[white] (0,-1cm) -- (0,1cm);
  \begin{scope}[node distance=1.8em,every node/.style={draw,circle,inner sep=2.5pt}]
    \node[right=of i,xshift=-1em] (n1) {};
    \node[close,right=of n1] (n2) {};
    \node[right=of n2] (n3) {};
    \node[close,right=of n3] (n4) {};
    \draw (n4) circle (2pt);
  \end{scope}
  \begin{scope}[->,>=latex']
    \draw (i) edge (n1);
    \draw (n1) edge (n2);
    \draw (n2) edge node[below] {$E_1$} (n3);
    \draw (n3) edge (n4);
    \draw (n1) edge[bend right=50] (n4);
    \draw (n3) edge[out=80,in=30] (n2);
  \end{scope}
\end{tikzpicture} &
\begin{tikzpicture}
  \node (i) {};
  \begin{scope}[node distance=1.8em,every node/.style={draw,circle,inner sep=2.5pt}]
    \node[right=of i,xshift=-1em] (n1) {};
    \node[right=of n1] (n2) {};
    \node[right=of n2] (n3) {};
    \node[right=of n3] (n4) {};
    \draw (n4) circle (2pt);
  \end{scope}
  \begin{scope}[->,>=latex']
    \draw (i) edge (n1);
    \draw (n1) edge (n2);
    \draw (n2) edge node[above] {$E_1$} (n3);
    \draw (n3) edge (n4);
    \draw (n1) edge[bend left=50] (n4);
    \draw (n3) edge[bend left=50] (n2);
  \end{scope}
\end{tikzpicture}\\
\begin{tikzpicture}
  \node (i) at (0,0) {};
  \draw[white] (0,-1cm) -- (0,1cm);
  \begin{scope}[node distance=1.8em,every
      node/.style={draw,circle,inner sep=2.5pt}]
    \node[right=of i,xshift=-1em] (i2) {};
    \node[right=of i2] (lp) {};
    \node[right=of lp] (s) {};
    \node[right=of s] (sp) {};
    \draw (sp) circle (2pt);
  \end{scope}
  \begin{scope}[->,>=latex']
    \draw (i) edge (i2);
    \draw (i2) edge (lp);
    \draw (lp) edge node[above] {$E_1$} (s);
    \draw (s) edge node[above] {$E_2$} (sp);
  \end{scope}
\end{tikzpicture}
&
\begin{tikzpicture}
  \tikzset{close/.style={node distance=1.8em}}
  \node (i) at (0,0) {};
  \draw[white] (0,-1cm) -- (0,1cm);
  \begin{scope}[node distance=1.8em,every node/.style={draw,circle,inner sep=2.5pt}]
    \node[right=of i,xshift=-1em] (lp) {};
    \node[close,above right=of lp] (s) {};
    \node[close,below right=of lp] (s2) {};
    \node[close,below right=of p] (e) {};
    \draw (e) circle (2pt);
  \end{scope}
  \begin{scope}[->,>=latex']
    \draw (i) edge (lp);
    \draw (lp) edge (s);
    \draw (lp) edge (s2);
    \draw (s) edge[bend left] node[above] {$E_1$} (e);
    \draw (s2) edge[bend right] node[above] {$E_2$} (e);
  \end{scope}
\end{tikzpicture} &
\begin{tikzpicture}
  \tikzset{close/.style={node distance=1.8em}}
  \node (i) at (0,0) {};
  \draw[white] (0,-1cm) -- (0,1cm);
  \begin{scope}[node distance=1.8em,every node/.style={draw,circle,inner sep=2.5pt}]
    \node[right=of i,xshift=-1em] (n1) {};
    \node[right=of n1,xshift=1em] (n2) {};
    \node[below=of n2,yshift=1em] (e) {};
    \draw (e) circle (2pt);
  \end{scope}
  \begin{scope}[->,>=latex']
    \draw (i) edge (n1);
    \draw (n1) edge (n2);
    \draw (n2) edge[bend right=70] node[above] {$E_1$} (n1);
    \draw (n1) edge[bend right] (e);
  \end{scope}
\end{tikzpicture} &
\begin{tikzpicture}
  \tikzset{close/.style={node distance=1.8em}}
  \node (i) at (0,0) {};
  \draw[white] (0,-1cm) -- (0,1cm);
  \begin{scope}[node distance=1.8em,every node/.style={draw,circle,inner sep=2.5pt}]
    \node[right=of i,xshift=-1em] (n1) {};
    \node[right=of n1,xshift=1em] (n2) {};
    \node[above=of n2,yshift=-1em] (e) {};
    \draw (e) circle (2pt);
  \end{scope}
  \begin{scope}[->,>=latex']
    \draw (i) edge (n1);
    \draw (n1) edge (n2);
    \draw (n2) edge[bend left=70] node[below] {$E_1$} (n1);
    \draw (n1) edge[bend left] (e);
  \end{scope}
\end{tikzpicture}
\end{tabular}
\vskip-0.5em
\caption{Abstract pNFA corresponding to $E_1\cdot E_2$, $E_1\Or E_2$,
  $E_1^*$ and $E_1^{*?}$, from which $\Thp(E)$ (top row) and $\Jp(E)$
  (bottom row) are constructed. The transitions are prioritized in
  clockwise order, starting at noon.\label{fig:building blocks}}
\end{figure}
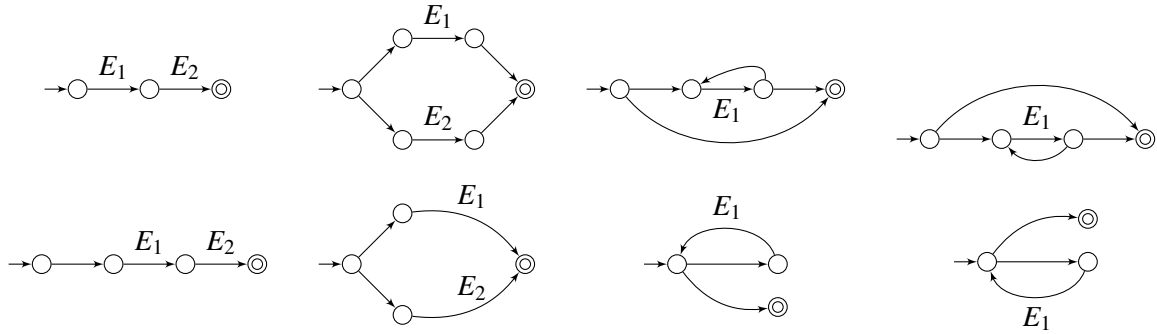

Now consider an expression $E$ of the form
$((\eps|E_1)\cdot\eps^*)^*\cdot E_2$. When building $\Thp(E)$ and
$\Jp(E)$, these correspond to the following abstract pNFA:

\vskip-0.5em
\begin{tabular}{ccc}
\begin{tikzpicture}
  \node (i) at (0,0) {};
  \draw[white] (1cm,-1.75cm) -- (1cm,1.5cm);
  \begin{scope}[node distance=1.5em,every node/.style={draw,circle,inner sep=2.5pt}]
    \node[right=of i,xshift=-1em] (n1) {};
    \node[right=of n1] (n2) {};
    \node[above right=of n2] (n3) {};
    \node[below right=of n2] (n4) {};
    \node[right=of n3] (n5) {};
    \node[right=of n4] (n6) {};
    \node[below right=of n5] (n7) {};
    \node[right=of n7] (n8) {};
    \node[right=of n8] (n9) {};
    \node[right=of n9] (n10) {};
    \node[right=of n10] (n11) {};
    \node[right=of n11] (n12) {};
    \draw (n12) circle (2pt);
  \end{scope}
  \begin{scope}[->,>=latex']
    \draw (i) edge (n1);
    \draw (n1) edge (n2);
    \draw (n2) edge (n3);
    \draw (n2) edge (n4);
    \draw (n3) edge (n5);
    \draw (n4) edge node[above] {$E_1$} (n6);
    \draw (n5) edge (n7);
    \draw (n6) edge (n7);
    \draw (n7) edge (n8);
    \draw (n8) edge (n9);
    \draw (n9) edge (n10);
    \draw (n10) edge (n11);
    \draw (n11) edge node[above] {$E_2$} (n12);
    \draw (n1) edge[bend right=45] (n11);
    \draw (n7) edge[bend right] (n10);
    \draw (n9) edge[out=80,in=30] (n8);
    \draw (n10) edge[out=80,in=80] (n2);
  \end{scope}
\end{tikzpicture} & 
\begin{tikzpicture}
  \draw[white] (0,-1.8cm) -- (0,2cm);
  \node at (0,0) {and};
\end{tikzpicture}&
\begin{tikzpicture}
  \node (i) at (0,0) {};
  \draw[white] (0,-1.75cm) -- (0,1.5cm);
  \begin{scope}[node distance=1.5em,every node/.style={draw,circle,inner sep=2.5pt}]
    \node[right=of i,xshift=-1em] (n1) {};
    \node[right=of n1] (n2) {};
    \node[right=of n2,xshift=2em] (n3) {};
    \node[below=of n3] (n4) {};
    \node[right=of n4] (n5) {};
    \draw (n5) circle (2pt);
    \node[above=of n3,yshift=1em] (n6) {};
    \node[above left=of n6] (n7) {};
    \node[below left=of n6] (n8) {};    
    \node[left=of n6,xshift=-2em] (n9) {};
    \node[above left=of n9] (n10) {};
  \end{scope}
  \begin{scope}[->,>=latex']
    \draw (i) edge (n1);
    \draw (n1) edge (n2);
    \draw (n2) edge (n3);
    \draw (n2) edge[bend right] (n4);
    \draw (n4) edge node[above] {$E_2$} (n5);
    \draw (n3) edge[bend right] (n6);
    \draw (n6) edge[out=80,in=0] (n7);
    \draw (n6) edge[out=270,in=0] (n8);
    \draw (n7) edge[bend right=10] (n9);
    \draw (n8) edge[bend left=10] node[below] {$E_1$} (n9);
    \draw (n9) edge[out=60,in=350] (n10);
    \draw (n10) edge[out=270,in=160] (n9);
    \draw (n9) edge[bend right] (n2);
  \end{scope}
\end{tikzpicture}
\end{tabular}
\vskip-0.75em

In $\Thp(E)$, when processing an input string $w$, the run will first choose the prioritized choice of the union operator (which is $\eps$), iterate the inner loop once, and then return to the initial state of the sub-pNFA corresponding to $\eps|E_1$. Now, the first alternative is blocked, meaning that Algorithm~\ref{algo:match} tries to match $E_1$. Assuming that no failure occurs, it will then proceed by following $\epsilon$ transitions leading to $E_2$.

Now look at $\Jp(E)$. Here, the run first bypasses $E_1$, similarly to $\Thp(E)$, but this leads to the state following the start state. As the first alternative of transitions leaving this state has already been used, the run drops out of the loop and proceeds with $E_2$. $E_1$ will only be tried after backtracking in case $E_2$ fails.

We thus get several cases by appropriately instantiating $E_1$ and $E_2$.
Assume first that we choose $E_1$ in such a way that $\Thp(E_1)$ suffers from exponential backtracking on a set $W$ of input strings over $\Sigma$, and $E_2=\Sigma^*$. Then $\Thp(E)$ causes exponential backtracking on strings in $W$ whereas $\Jp(E)$ does not backtrack at all. A concrete example is obtained by taking $\Sigma=\{a,b\}$, $E_1=(a^*)^*$, and $W=\{a^nb\mid n\in\nat\}$.

Conversely, we may choose $E_2=\eps\Or E_2'$ so that $\Jp(E_2')$ fails exponentially on $W$, but $E_1=\Sigma^*$. Then $\Thp(E)$ will match strings in $W$ in linear time whereas $\Jp(E)$ will take exponential time.

One can easily combine two examples of the types above into one, to obtain an expression such that $\Thp(E)$ shows exponential behavior on a set $W$ of strings on which $\Jp(E)$ runs in linear time whereas $\Jp(E)$ shows exponential behavior on another set $W'$ of strings on which $\Thp(E)$ runs in linear time.

\section{Static Analysis of Exponential Backtracking}

We now consider the problem of deciding whether a given pNFA causes
backtracking matching similar to Algorithm~\ref{algo:java-match} to
run exponentially. More precisely, we ask whether a pNFA has
exponentially large backtracking runs. In the case where the
considered pNFA is $\Jp(E)$, this yields a statement about the running
time of Algorithm~\ref{algo:java-match}. However, we are interested in
the problem in general, because other regular expression engines may
correspond to other pNFA. There are two variants of the decision
problem, with very different complexities. Let us start by defining
the first.

\begin{definition}  
  \label{defn:constr-static}
  Given a pNFA $A = (Q_1,Q_2,\sig,q_0,\delta_1,\delta_2,F)$, let $f(n)
  = \max \{ |\btr_A(w)| \mid w \in \sig^*, |w| \le n\}$ for all $n\in
  \nat$. We say that $A$ has \emph{exponential backtracking} if $f\in
  2^{\Omega(n)}$ (or equivalently, if $f(n) \in 2^{\Theta(n)}$) and \emph{polynomial backtracking of degree $k$} for
  $k\in\nat$ if $f\in \Theta(n^{k+1})$.
  
  If the pNFA $A^f=(Q_1,Q_2,\sig,q_0,\delta_1,\delta_2,\emptyset)$,
  has exponential backtracking (or polynomial backtracking), then we say that
  $A$ has  \emph{exponential failure backtracking} (polynomial failure backtracking, resp.).
\end{definition}

Failure backtracking provides an upper bound for the general case.
In cases where the worst-case matching complexity 
can be exhibited by a family of strings not in $\L(A)$, this analysis is precise. This 
happens for example if for some $\$ \in\Sigma$, we have $w\$ \not\in\L(A)$ for all $w\in\Sigma^*$, or more generally, if for each $w\in\Sigma^*$, there is $w'\in\Sigma^*$ such that
$ww'\not\in\L(A)$. Failure backtracking analysis is of great interest in
  that it is more efficiently decidable (being in PTIME) than the
  general case.
  It is closely related to the case
  considered in e.g.~\cite{kirrage:13}, where the matching complexity of the strings not in $\L(A)$ is studied.

\subsection{An Upper Bound on the Complexity of General Backtracking Analysis}

Let us first establish an upper bound on the complexity of general
backtracking analysis. We will give an algorithm which solves this
problem in EXPTIME. Afterwards, we will also note some minor hardness
results. The EXPTIME decision procedure relies heavily on a result
from~\cite{drewes:01}.
\begin{lemma}
  \label{lem:exp-output-decide}
  Given a string-to-tree transducer $\stt=(Q,\sig,\Gamma,q_0,\delta)$, it is decidable in deterministic exponential time whether the function $f(n) = \max\{|t| \mid t\in\stt(s),\ s\in\sig^*,\ |s| \le n\}$ grows exponentially, i.e.\ whether $f\in 2^{\Omega(n)}$.
\end{lemma}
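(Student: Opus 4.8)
The plan is to reduce the question to a purely combinatorial property of the transducer's "derivation graph" and then apply the pumping-style result of Drewes~\cite{drewes:01}. First I would observe that a string-to-tree transducer is, up to cosmetic differences, a special case of the devices analyzed in~\cite{drewes:01}: running $\stt$ on an input string $s = \alpha_1\cdots\alpha_n$ produces a sequence of trees $t_0,\dots,t_n$ in which each step rewrites all state-labeled leaves in parallel according to $\delta$. This is exactly a (parallel, linear, non-deleting-in-the-relevant-sense) tree-generating process driven by the input word, so the size $|t_n|$ of the output is governed by how the number of state-occurrences in the "frontier" evolves as symbols are consumed. The key quantity is, for each pair of states $p,q\in Q$ and each input symbol $\alpha$, the number $N_\alpha(p,q)$ of leaves labeled $q$ in $\delta(p,\alpha)$, together with the number $c_\alpha(p)$ of output-alphabet nodes in $\delta(p,\alpha)$; these data define, for each $\alpha$, a nonnegative integer matrix $M_\alpha\in\nat^{Q\times Q}$.

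The second step is to characterize exponential growth in terms of products of these matrices. The multiset of states on the frontier after reading $\alpha_1\cdots\alpha_i$ is the vector $e_{q_0}^\top M_{\alpha_1}\cdots M_{\alpha_i}$ (where $e_{q_0}$ is the unit vector for the initial state), and $|t_n|$ is, up to a constant factor depending only on $|Q|$ and $\max_{\alpha,p}|\delta(p,\alpha)|$, bounded above and below by the total frontier size accumulated along the way, i.e.\ $\sum_{i=0}^{n}\|e_{q_0}^\top M_{\alpha_1}\cdots M_{\alpha_i}\|_1$ plus the output nodes emitted. Hence $f(n)$ grows exponentially if and only if there is a choice of input symbols making some entry of $M_{\alpha_1}\cdots M_{\alpha_n}$ grow exponentially in $n$ — equivalently, the "best" matrix product over the finite alphabet $\{M_\alpha\mid\alpha\in\sig\}$ has exponential joint spectral-radius-like behavior restricted to reachable/co-reachable states. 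This is precisely the type of boundedness/growth question that~\cite{drewes:01} settles: one removes states that are not reachable from $q_0$ or from which no output can ever be completed, and then checks for a reachable cycle (a word $u$ and a state $p$) such that reading $u$ from $p$ produces at least two leaves labeled by states from which $p$ is again reachable — a "branching loop". The existence of such a configuration is decidable, and by the results of~\cite{drewes:01} it is equivalent to $f\in 2^{\Omega(n)}$; the dichotomy $f\in 2^{\Omega(n)}$ or $f\in n^{O(1)}$ also comes from there.

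The third step is the complexity bookkeeping. The number of relevant states is $|Q|$, but detecting a branching loop in the worst case requires searching over state configurations whose size can be exponential in $|Q|$ (one tracks, roughly, which subsets of states can co-occur and which can reach which), and the reachability/co-reachability pruning is itself a fixpoint computation over such configurations; this is where the single exponential in the statement comes from, and it matches the bound proved in~\cite{drewes:01}. I would therefore phrase the argument as: (i) translate $\stt$ into the formalism of~\cite{drewes:01} (a routine syntactic encoding), (ii) quote the decidability-of-exponential-growth result there verbatim, and (iii) observe that the translation is polynomial in $|\stt|$ so the overall procedure stays in EXPTIME. The main obstacle — and the part that needs the most care — is step (i): making sure the translation faithfully preserves the growth rate of $|t|$, in particular that the parallel, input-synchronized rewriting of $\stt$ (every symbol rewrites every state leaf simultaneously, and the run length is pinned to $|s|$) corresponds exactly to the mode of derivation for which~\cite{drewes:01} proves the dichotomy, rather than to some free/unsynchronized mode where the growth function could differ. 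Once that correspondence is pinned down, the rest is invocation of the cited theorem.
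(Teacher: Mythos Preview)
The paper does not prove this lemma at all: it is stated immediately after the sentence ``The EXPTIME decision procedure relies heavily on a result from~\cite{drewes:01}'' and is used as a black box imported from that reference. Your proposal, after the matrix-semigroup intuition and the branching-loop characterization, ultimately also defers to~\cite{drewes:01} for the actual decidability and complexity claim, so the two approaches coincide in substance --- both are ``cite~\cite{drewes:01}''. The additional material you supply (the frontier-vector/matrix-product reformulation and the informal description of the branching-loop criterion) is reasonable motivation but is not present in the paper, and it is also not a self-contained proof: your step~(ii) is literally ``quote the decidability-of-exponential-growth result there verbatim'', which is exactly what the paper does in a single sentence. One small caution on your sketch: the transducers used later in the paper (Definition~\ref{defn:transduc}) are nondeterministic (several right-hand sides per state/symbol pair), so the single-matrix-per-symbol picture $M_\alpha$ you describe would need to be replaced by a finite \emph{set} of matrices per symbol, and the growth question becomes one about maxima over all products drawn from these sets; this is still within the scope of~\cite{drewes:01}, but your write-up glosses over it.
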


In short, we will hereafter construct a string-to-tree transducer from
a pNFA $A$ which reads an input string (suitably decorated) and
outputs the corresponding backtracking run of $A$
(see Definition~\ref{defn:btr}).  In this way, we model the running of
Algorithm~\ref{algo:match} on that string. Then
Lemma~\ref{lem:exp-output-decide} can be applied to this transducer to
decide exponential backtracking. To simplify the construction we first
make a small adjustment to the input pNFA in the form of a
``flattening'', which ensures that $\delta_2$ maps $Q_2$ to
$Q_1^*$. That is, we remove the opportunity for repeated
$\epsilon$-transitions.

\begin{definition}  
  \label{defn:delta2-flat}
  Let $A = (Q_1,Q_2,\sig,q_0,\delta_1,\delta_2,F)$ be a pNFA. Define
  $d\colon (Q_1\cup Q_2) \times (Q_2 \to \nat) \to Q_1^*$, and
  $\bar{r}\colon Q_1^* \to Q_1^*$ as follows:
  \[
  d(q, C) = \left\{ \begin{array}{cl}
      q & \text{if $q \in Q_1$,}\\
      d(q_{i+1},C_{q\mapsto i+1}) \cdots d(q_n,C_{q \mapsto i+1}) & \text{if $q \in
        Q_2$, $\delta_2(q) = (q_1 \cdots q_n)$ and $C(q) = i$.}
    \end{array}
  \right.
  \]
  \[
  \bar{r}(s) = \left\{\begin{array}{cl}
      \bar r(uv) & \text{if $s=uqv$ for some $u,v\in Q_1^*$ and $q\in Q_1$ with $|u|_q\ge2$}\\
      s & \text{otherwise.}
    \end{array}\right.
  \]
  That is, $\bar{r}$ removes all repetitions of each state $q$ beyond
  the first two occurrences.

  Now, the \emph{$\delta_2$-flattening of $A$} is the pNFA
  $A'=(Q_1,Q_2,\sig,q_0,\delta_1,\delta_2',F')$ with $\delta_2'(q) =
  \bar{r}(d(q,0^{Q_2}))$ for all $q\in Q_2$, and $F' = \{q \in Q_1
  \cup Q_2 \mid d(q,0^{Q_2})\cap F\neq\emptyset\}$.
\end{definition}

First let us note that the size of $A'$ in Definition~\ref{defn:delta2-flat} is polynomial in
  the size of $A$, as no new states are added and no right-hand side
  is greater than polynomial in length ($2|Q_1|$ is the maximum length
  after applying $\bar{r}$). Furthermore, the construction itself can be performed
  in polynomial time in a straightforward way by computing $d$
  incrementally in a left-to-right fashion, and aborting each
  recursion visiting a state that has already been seen twice to the
  left.

Before proving some properties of the above construction we make a
supporting observation.

\begin{lemma}  
  \label{lemma:slender-exp-tree}
  Let $\sigma$ be a function on trees such that, for $t=f[t_1,\dots,t_k]$
  \[
  \sigma(t)=\left\{\begin{array}{ll}
  t&\text{if $k=0$}\\
  f[\sigma(t_1)]&\text{if $k=1$}\\
  f[\sigma(t_i),\sigma(t_j)]&\text{otherwise, where $t_i,t_j$ ($i\neq j$) are largest among $t_1,\dots,t_k$}.
  \end{array}\right.
  \]
  Let $T_0,T_1,T_2,\dots$ be sets of trees of rank at most $k$. Then
  the function $f(n)=\max\{|t|\mid t\in T_n\}$ grows exponentially if
  and only if $f'(n)=\max\{|\sigma(t)|\mid t\in T_n\}$ grows
  exponentially.
\end{lemma}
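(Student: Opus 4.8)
The plan is to prove the two directions of the equivalence separately, with the non-trivial inclusion being that if $f'$ grows exponentially then so does $f$. One direction is easy: since $\sigma(t)$ is obtained from $t$ by deleting subtrees, we have $|\sigma(t)|\le|t|$ for every tree $t$, hence $f'(n)\le f(n)$ for all $n$, so exponential growth of $f'$ immediately implies exponential growth of $f$. The content of the lemma is therefore the converse: a bound on the ``slenderized'' sizes $|\sigma(t)|$ controls the full sizes $|t|$, up to the effect of the rank bound $k$ and an exponential blow-up in the base of the exponential.

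For the converse, first I would establish a clean quantitative relationship between $|t|$ and $|\sigma(t)|$ in terms of the \emph{height} of $t$. Let $h(t)$ denote the height. Since every vertex of $t$ has at most $k$ children, $|t|\le k^{h(t)}\cdot\text{(something)}$; more precisely $|t|\le \sum_{j=0}^{h(t)}k^{j}\le k^{h(t)+1}$ (taking $k\ge 2$ without loss of generality, the case $k\le 1$ being trivial since then $\sigma$ is the identity). On the other hand, $\sigma$ preserves height: an easy induction on $t$ shows $h(\sigma(t))=h(t)$, because in the binary case $\sigma$ keeps a largest child $t_i$, and a largest subtree among the children is in particular one of maximal height (here I should be slightly careful: ``largest'' means largest in size; but I can argue that if $\sigma(t_i)$ realizes the maximal slender-size then iterating the argument down a root-to-leaf path of $\sigma(t)$ reconstructs a path in $t$ of the same length). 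Since $\sigma(t)$ is a binary tree (rank $\le 2$), $|\sigma(t)|\ge h(\sigma(t))+1 = h(t)+1$. Combining, $|t|\le k^{h(t)+1}\le k^{|\sigma(t)|}$. Therefore $f(n)\le k^{f'(n)}$.

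Now I would close the argument. Suppose $f$ grows exponentially, i.e.\ $f(n)\ge 2^{cn}$ for infinitely many $n$ (for some $c>0$). From $f(n)\le k^{f'(n)}$ we get $f'(n)\ge \log_k f(n)\ge (cn)/\log_2 k$ for those $n$, so $f'\in 2^{\Omega(n)}$ as well — wait, this gives only a \emph{linear} lower bound on $f'$, not an exponential one, which is the wrong direction. The correct step is the other way: I must instead use the easy inequality $|\sigma(t)|\le|t|$ together with the \emph{reverse} bound to pin exponential growth of $f'$ from that of $f$. So let me restructure: the real work is showing \emph{exponential $f$ $\Rightarrow$ exponential $f'$}. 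For this the height bound is exactly what is needed: exponential $f(n)$ forces, for infinitely many $n$, a tree $t\in T_n$ of size $\ge 2^{cn}$, hence (by $|t|\le k^{h(t)+1}$) of height $h(t)\ge (cn)/\log_2 k - 1$, hence $\sigma(t)$ has height $\ge (cn)/\log_2 k - 1$ and, being binary, also \emph{at least} that many vertices — still only linear. The genuine gain must come from observing that $\sigma(t)$ is not merely a path but retains, at \emph{every} level, the two largest subtrees, so that if $t$ is ``bushy'' then so is $\sigma(t)$.

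So the key lemma I actually need is: \emph{for every tree $t$ of rank at most $k$, $|t|\le 2^{k-1}\cdot|\sigma(t)|$ is false in general, but a bound of the form $|t|\le c_k\cdot|\sigma(t)|^{d_k}$ for constants depending only on $k$ does hold} — and in fact the sharp statement is that the maximal size of a tree of rank $\le k$ and height $\le h$ is $\le \frac{k^{h+1}-1}{k-1}$, while the maximal size of a binary tree (what $\sigma(t)$ is) of height $\le h$ is $\le 2^{h+1}-1$, and crucially $\sigma(t)$ has height \emph{equal} to that of $t$ \emph{and} each of its internal vertices has exactly two children whenever the corresponding vertex in $t$ had $\ge 2$. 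The cleanest route: prove by induction on $t$ that $|t|\le (k/2)^{?}\cdot\ldots$ — actually the slickest is to prove directly that $|t|\le k^{\,\lceil\log_2(|\sigma(t)|+1)\rceil}$ is not right either. I will instead prove the bound $|t|\le 2^{k\cdot|\sigma(t)|}$ is too weak; the right exponent is obtained from: $\sigma(t)$, as a binary tree, has $\ge\log_2(|\sigma(t)|+1)-1$ leaves on some...

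I will stop second-guessing and commit to the following clean plan, which I expect to work. Prove: \textbf{(i)} $|\sigma(t)|\le|t|$ (trivial, deletions only), giving exponential $f'\Rightarrow$ exponential $f$; \textbf{(ii)} $h(\sigma(t))=h(t)$ (induction: in the binary case a largest child has maximal height, and $\sigma$ fixes height on smaller trees by induction — if the unique largest child does not have maximal height, replace the selection rule's tie-breaking observation by the remark that \emph{some} child realizing the maximum height has size at least as large as... ) — here I anticipate the main obstacle. \textbf{(iii)} For a rank-$\le k$ tree, $|t|\le\frac{k}{k-1}k^{h(t)}\le k^{\,h(t)+1}$; for a binary tree $s$, $|s|\ge h(s)+1$. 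Hence $|t|\le k^{\,h(t)+1}=k^{\,h(\sigma(t))+1}\le k^{\,|\sigma(t)|}$, which after taking logarithms shows $f'(n)\ge\log_k f(n)$ — and since the definition of ``exponential'' here is $f\in 2^{\Omega(n)}$, and $\log_k 2^{\Omega(n)}=\Omega(n)$, this is \emph{still only linear}.

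Therefore the honest conclusion is that step (iii) as stated cannot work, and the bound relating $|t|$ to $|\sigma(t)|$ must be \emph{polynomial}, not logarithmic. The correct key fact, which I will prove, is: \textbf{$|t|\le|\sigma(t)|^{\log_2 k}$} (for $k\ge 2$), or equivalently $\log|t|\le(\log_2 k)\log|\sigma(t)|$. This follows by induction on $t$: writing $N(t)=|t|$ and $M(t)=|\sigma(t)|$, in the binary-branching case with the two largest children $t_i,t_j$ of sizes $\ge N(t')$ for all other children $t'$, we have $N(t)\le k\cdot N(t_i)$ (largest child) while $M(t)=1+M(t_i)+M(t_j)\ge 1+M(t_i)$, and also $M(t)\ge 2M(t_{\min(i,j)})$-ish; the recursion $N\le k N_{\max}$, $M\ge 2 M_{\max}$ (using that $M(t_i),M(t_j)$ are both $\ge$ the slender-size of every child, so the larger of them is $\ge$ half of... no — I need $M(t)\ge 2\cdot\min(M(t_i),M(t_j))$ combined with $M(t)\ge M(t_i)+M(t_j)$ is automatic). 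The clean inequality is $M(t)\ge 2M_{\text{2nd-max-of-}M}$, hence unrolling gives $M(t)\ge 2^{h}$ along a path where both branches stay heavy, while $N(t)\le k^{h}$; so $N(t)\le k^{\log_2 M(t)}=M(t)^{\log_2 k}$. \textbf{This is the polynomial bound}, and it yields: $f$ exponential $\Rightarrow$ for infinitely many $n$, $f(n)\ge 2^{cn}$, hence $f'(n)=M\ge N^{1/\log_2 k}\ge 2^{cn/\log_2 k}$, i.e.\ $f'\in 2^{\Omega(n)}$.

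\textbf{Summary of the plan.} (1) Dispose of $k\le1$ (then $\sigma=\mathrm{id}$). (2) Easy direction $f'$ exp $\Rightarrow f$ exp via $|\sigma(t)|\le|t|$. (3) Prove by structural induction the two-sided-branching size estimate $|t|\le|\sigma(t)|^{\log_2 k}$, i.e.\ that slenderizing loses at most a polynomial factor; the heart is the recursion $|t|\le k\cdot(\text{largest child})$ versus $|\sigma(t)|\ge 2\cdot(\text{second-largest of the }|\sigma(t_i)|)$ — and the \textbf{main obstacle} is handling the tie-breaking and the interaction between ``largest in original size'' (what the rule selects) and ``largest in slender size'' (what controls $|\sigma(t)|$), which I will resolve by induction showing $|\sigma(t_i)|$ is monotone enough in $|t_i|$ that selecting the two originally-largest children also keeps slender-size within a constant factor of the best possible. (4) Conclude the remaining direction: $f(n)\ge2^{\Omega(n)}$ forces, via the polynomial bound, $f'(n)\ge2^{\Omega(n)}$.
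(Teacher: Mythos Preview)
Your final plan is correct and lands on exactly the same key estimate as the paper, namely $|\sigma(t)|\ge|t|^{\log_k 2}$ (equivalently $|t|\le|\sigma(t)|^{\log_2 k}$), from which the transfer of exponential growth from $f$ to $f'$ is immediate. The paper reaches this bound by a different, somewhat slicker route: it first reduces to leaf counts $\ell(t)$ (assuming without loss of generality that no node is unary, so that $|t|$ and $\ell(t)$ differ by at most a factor of~$2$), and then argues extremally---among rank-$k$ trees with a given number of leaves, the \emph{balanced} $k$-ary tree minimises $\ell(\sigma(t))$, and for that tree one reads off $\ell(\sigma(t))=\ell(t)^{\log_k 2}$ directly. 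Your structural-induction route works too, but the ``monotonicity obstacle'' you flag is a red herring: no separate monotonicity lemma is needed, because the inductive hypothesis itself supplies the missing link. Concretely, if $t_i,t_j$ are the two children largest in \emph{original} size then by induction $|\sigma(t_i)|\ge|t_i|^{1/c}$ and $|\sigma(t_j)|\ge|t_j|^{1/c}$ (with $c=\log_2 k$); since every other child $t_l$ has $|t_l|\le|t_j|$ one gets $|t|\le 1+|t_i|+(k-1)|t_j|$ together with $|\sigma(t)|\ge 1+|t_i|^{1/c}+|t_j|^{1/c}$, and the elementary inequality $(1+a+b)^c\ge 1+a^c+(2^c-1)b^c$ for $a\ge b\ge 1$, $c\ge 1$ closes the induction. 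In particular your stated recursion ``$|\sigma(t)|\ge 2\cdot(\text{second-largest of the }|\sigma(t_l)|)$'' is not quite the right formulation---$\sigma$ does \emph{not} select by $\sigma$-size, so that inequality can fail---but once you feed the inductive hypothesis back in instead of reaching for monotonicity, the argument goes through. The paper's extremal argument buys you freedom from this bookkeeping; your inductive argument buys you an explicit, self-contained inequality.
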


We leave out the (rather easy) proof of the lemma due to space limitations.

\begin{lemma}
  \label{lemma:flat-is-ok}
  Let $A=(Q_1,Q_2,\sig,q_0,\delta_1,\delta_2,F)$ be a pNFA and $A'$ its $\delta_2$-flattening. Then $A'$ can be constructed in polynomial time, $\L(A')=\L(A)$, and
  the function $f(n)=\max\{|\btr_A(w)|\mid w\in\sig^*,\ |w|\le n\}$ grows
  exponentially if and only if $f'(n)=\max\{|\btr_{A'}(w)|\mid w\in\sig^*,\ |w|\le n\}$ grows exponentially.
\end{lemma}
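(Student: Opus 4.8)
The plan is to establish the three claims of the lemma separately, reusing work already done. The first claim, that $A'$ can be constructed in polynomial time, was essentially already argued in the paragraph following Definition~\ref{defn:delta2-flat}: the functions $d$ and $\bar r$ can be evaluated incrementally left-to-right, aborting any recursion that revisits a state seen twice, so each right-hand side has length at most $2|Q_1|$ and the total work is polynomial. I would simply refer back to that discussion. The second claim, $\L(A')=\L(A)$, should follow from the observation that $d(q,0^{Q_2})$ lists exactly the $Q_1$-states reachable from $q$ by $\eps$-transitions in $\nfa A$ (in priority-induced order, with multiplicity), so the NFA $\nfa{A'}$ is obtained from $\nfa A$ by $\eps$-closure on the $Q_2$-side; since $F'$ is defined to be the set of states whose $\eps$-closure meets $F$, accepting runs are preserved, and $\L(\nfa{A'})=\L(\nfa A)$. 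This is a routine $\eps$-elimination argument and I would state it as such.

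The heart of the lemma is the third claim: $f$ grows exponentially iff $f'$ does. The key structural fact is that $\btr_{A'}(w)$ is obtained from $\btr_A(w)$ by contracting each maximal ``chain'' of $Q_2$-nodes that $A$ traverses between consecutive $Q_1$-nodes into a single $Q_2$-node of $A'$. Concretely, I would prove by induction on $|w|$ (and on the structure of the backtracking run) that for every $q\in Q_1\cup Q_2$ and every valid $C$, the tree $\btr_{A'}(q',w,0^{Q_2})$ — where $q'$ is the appropriate $Q_2$-entry point in $A'$ — has the same set of $Q_1$-labelled nodes, in the same left-to-right and ancestor order, as $\btr_A(q,w,0^{Q_2})$, and moreover the number of $Q_2$-nodes in $\btr_A(q,w,C)$ sitting strictly between a $Q_1$-node (or the root) and its next $Q_1$-descendants is bounded by a constant depending only on $|Q_1|$ (because $\bar r$ caps repetitions at two, and the $\eps$-paths in $A$ between two $Q_1$-states, being acyclic once $C$ blocks revisits, have length at most linear in $|Q_2|$; in fact the relevant point is that the number of \emph{branching} $Q_2$-nodes along such a segment is at most $|Q_2|$). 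Hence $|\btr_{A'}(w)| \le |\btr_A(w)| \le p(|A|)\cdot|\btr_{A'}(w)|$ for a fixed polynomial $p$, which immediately gives the equivalence of exponential growth.

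The main obstacle I anticipate is the bookkeeping around the parameter $C$: in $\btr_A$, the third argument tracks, per $Q_2$-state, the highest alternative tried since the last symbol was read, whereas in $A'$ the flattened $\delta_2'$ has ``baked in'' the old $\eps$-path structure, so the correspondence between a partially-explored $Q_2$-state in $A$ and its counterpart in $A'$ is not literal — one has to argue that the subtrees explored match up as \emph{multisets}, even though $\bar r$ may have deleted some repeated entries. Here is where Lemma~\ref{lemma:slender-exp-tree} does the real work: rather than matching $\btr_A$ and $\btr_{A'}$ node for node, I would apply $\sigma$ (the ``keep the two largest children'' contraction) to both trees. Dropping all but the two largest subtrees at each node exactly neutralizes the effect of $\bar r$'s truncation to the first two occurrences of each state, so $\sigma(\btr_A(w))$ and $\sigma(\btr_{A'}(w))$ differ only by the aforementioned constant-length $Q_2$-chains and hence by a polynomial factor; Lemma~\ref{lemma:slender-exp-tree} then transfers the exponential-growth equivalence back to the untruncated trees on both sides. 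I would organize the proof so that the delicate multiset argument is confined to the $\sigma$-images, keeping the rest routine.
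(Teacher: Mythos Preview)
Your handling of the polynomial-time construction and of $\L(A')=\L(A)$ is fine and essentially matches the paper. The gap is in the exponential-growth equivalence.

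Your inductive claim that $\btr_{A'}(q',w,0^{Q_2})$ ``has the same set of $Q_1$-labelled nodes'' as $\btr_A(q,w,0^{Q_2})$ is false, and consequently so is the sandwich $|\btr_A(w)| \le p(|A|)\cdot|\btr_{A'}(w)|$. Take $Q_2=\{q\}$, $Q_1=\{p\}$, $q_0=q$, $\delta_2(q)=ppp$, $\delta_1(p,a)=q$, $F=\emptyset$. Then $d(q,0^{Q_2})=ppp$ and $\bar r$ truncates this to $pp$, so $|\btr_A(a^n)|\sim 3^n$ while $|\btr_{A'}(a^n)|\sim 2^n$; the ratio is $(3/2)^n$, not bounded by any polynomial in $|A|$. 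The point is that $\bar r$ can delete an unbounded number of $Q_1$-children at each branching node, not merely the constant-length $Q_2$-chains you account for.

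Your fallback to Lemma~\ref{lemma:slender-exp-tree} is the right instinct, but the assertion that $\sigma$ ``exactly neutralizes'' $\bar r$ is unjustified as stated: $\sigma$ keeps the two largest children \emph{overall}, whereas $\bar r$ keeps the \emph{first} two occurrences \emph{of each state}; these are different operations and there is no a priori reason the first two should include the largest two. What the paper does, and what you are missing, is to pass through an intermediate pNFA $A''$ obtained by flattening \emph{without} applying $\bar r$. Going from $\btr_A(w)$ to $\btr_{A''}(w)$ only contracts $Q_2$-chains and loses a constant factor of nodes (this is the part your bound \emph{does} cover). The crucial observation is then that in $\btr_{A''}(w)$ any two siblings carrying the same $Q_1$-label root \emph{identical} subtrees, since both start from the same state at the same input position and immediately reset $C$ to $0^{Q_2}$ on the next $\delta_1$-step. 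Hence the two copies $\bar r$ retains are already maximal, so $|\btr_{A'}(w)|\ge|\sigma(\btr_{A''}(w))|$, and Lemma~\ref{lemma:slender-exp-tree} applied to the family $\{\btr_{A''}(w)\}$ closes the argument. Without this identity observation the link between ``first two'' and ``two largest'' is simply absent.
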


\begin{proofsketch}
  Let $A'=(Q_1,Q_2,\sig,q_0,\delta_1,\delta_2',F')$. As noted, $A'$ can be constructed in polynomial time.
  
  The language equivalence of $A$ and $A'$ can be established by
  induction on the accepting runs of $A$ and $A'$. $\delta_2'$
  is a closure on $\delta_2$, such that any accepting run for $A$ of
  the form $p_1 \cdots p_n$ can be turned into one for $A'$ by
  replacing each maximal subsequence $p_{k} \cdots p_{k+i} \in Q_2^*$
  with just $p_k$. The function $d$ in the construction of $\delta_2$
  will ensure that $p_k$ is accepting if this was at the end of the
  run, and that $p_k$ can go directly to the following $Q_1$ state.
  The converse is equally straightforward, as a suitable sequence from $Q_2$
  can be inserted into an accepting run for $A'$ to create a correct
  accepting run for $A$.

  Finally, we argue that $A'$ exhibits exponential backtracking
  behavior if and only if $A$ does. By the construction of $A'$, we have $\btr_{A'}(w)\le\btr_A(w)$. Hence, $f$ grows exponentially if $f'$ does. It remains to consider the other direction. Thus, assume that $f(n)$ grows exponentially. We have to show that $f'(n)$ grows exponentially as well. Let $A''$ be the pNFA generated by $\delta_2$-flattening $A$ without applying $\bar{r}$. Let $t=\btr_A(w)$ and $t''=\btr_{A''}(w)$ for some input string $w$. Then $t''$ is obtained from $t$ by repeatedly replacing subtrees of the form $q[s_1,\dots,s_k,q'[t_1,\dots,t_l],s_{k+1},\dots,s_m]$, where $q,q'\in Q_2$, by $q[s_1,\dots,s_k,t_1,\dots,t_l,s_{k+1},\dots,s_m]$. Since Definition~\ref{defn:btr} prevents repeated $\eps$-cycles, this process removes only a constant fraction of the nodes in $t$.\footnote{The constant may be exponential in the size of $A$, but for the question at hand this does not matter since the backtracking behavior in the length of the string is what is considered.} Hence, $f''(n)=\max\{|\btr_{A''}(w)|\mid w\in\sig^*,\ |w|\le n\}$ grows exponentially. Now, compare $t''$ with $t'=\btr_{A'}(w)$. If a node of $t''$ has $m$ children with the same state $q\in Q_2$ in their roots, by the definition of backtracking runs the $m$ subtrees rooted at those nodes will be identical. This is the case since the run for each subtree starts in the same state and string position, and the application of $d$ in the partial flattening ensures that $C$ is made irrelevant by an immediately following $\delta_1$ transition resetting it to $0^{Q_2}$. The application of $\bar r$ to $A''$ means that, in effect, the first two copies of these $m$ subtrees are kept in $t'$. In particular, the two largest subtrees of the node are kept in $t'$. According to Lemma~\ref{lemma:slender-exp-tree}, this means that $g'$ grows exponentially.
  \end{proofsketch}
  
  It should be noticed that, for the proof above to be valid, it is important that $\bar{r}$ preserves the
  order of occurrences of states from the left, as a subtree being accepting means that no further subtrees are
  constructed to the right of it (ensuring no extraneous subtrees get included).
  

We are now prepared to define the
construction which for any $\delta_2$-flattened pNFA $A$ produces a
string-to-tree transducer $\stt$ such that $\btr_A(w) = t$ if and only if $t\in \stt(w')$. Here, $w'$ is a version of $w$ decorated with extra symbols $\sps$ and $\$$. The former will serve as padding to be read when $\delta_2$ transitions are taken, and $\$$ marks the beginning and the end of the
string.
\begin{definition}
  \label{defn:transduc}
  Given a $\delta_2$-flattened pNFA
  $A=(Q_1,Q_2,\sig,q_0,\delta_1,\delta_2,F)$ we construct the string-to-tree
  transducer $\stt = (Q,\sig',\Gamma,q_0',\delta)$ in the
  following way. $Q = \{q_0'\} \cup \{a_q,f_q \mid q\in Q_1 \cup Q_2\}$,  $\sig' = \sig \uplus \{\sps,\$\}$, and $\Gamma = Q_1 \cup Q_2 \uplus \{\acc,\rej\}$. Furthermore, $\delta$ consists of the following transitions:
    \begin{enumerate}
    \item Let $q_0' \xto{\$} a_{q_0}$ and $q_0' \xto{\$}
      f_{q_0}$. For all $q\in Q$ let $q\xto{\sps} q$.
    \item For all $q\in Q_1$ and $\alpha \in \Sigma$:
      \begin{enumerate}
      \item If $\delta_1(q,\alpha)=q'$ let $a_q \xto{\alpha}
        q[a_{q'}]$ and $f_q \xto{\alpha} q[f_{q'}]$.
      \item If $\delta_1(q,\alpha)$ is undefined let $f_q \xto{\alpha}
        q[\rej]$.
      \end{enumerate}
    \item For all $q \in Q_2$, if $q_1 \cdots q_n = \delta_2(q)$,
      then for all $i\in \{0,\ldots,n-1\}$ let $a_q \xto{\sps}
      q[f_{q_1},\ldots,f_{q_i},a_{q_{i+1}}]$, and let $f_q
      \xto{\sps} q[f_{q_1},\ldots,f_{q_n}]$.
    \item Finally if $q\in F$ let $a_q \xto{\$} q[\acc]$, whereas when
      $q \notin F$:
      \begin{enumerate}
      \item if $q\in Q_1$ let $f_q \xto{\$} q[\rej]$, and,
      \item if $q \in Q_2$ and $q_1 \cdots q_n = \delta_2(q)$, then
        $f_q \xto{\$} q[q_1[\rej],\ldots,q_n[\rej]]$.
      \end{enumerate}
  \end{enumerate}
\end{definition}

\begin{definition}
  The string $w_1 \alpha_1 w_2 \alpha_2 \cdots w_n \alpha_n w_{n+1}$ is a
  \emph{decoration} of $\alpha_1\cdots \alpha_n \in \Sigma^*$ if $w_i \in
  \{\$,\sps\}^*$ for each $i$. $\$\sps \alpha_1 \sps \alpha_2 \cdots \sps
  \alpha_n\$$ is the \emph{correct decoration} of $\alpha_1\cdots
  \alpha_n$, denoted $\textit{dec}(\alpha_1\cdots \alpha_n)$.
\end{definition}

\begin{lemma}
  \label{lemma:transduc-is-ok}
  For a $\delta_2$-flattened pNFA $A$, the string-to-tree transducer
  $\stt$ as constructed by Definition~\ref{defn:transduc}, and an
  input string $w=\alpha_1 \cdots \alpha_n$, it holds that
  $\stt(\textit{dec}(w)) = \{\btr_A(w)\}$. For \emph{all} $u$ which
  are decorations of $w$ either $\stt(u) = \emptyset$ or $\stt(u) =
  \{\btr_A(w)\}$.
\end{lemma}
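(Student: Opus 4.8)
The plan is to prove, by induction on $|w|$, a slightly more general statement from which the lemma follows by peeling off the leading $\$$. Concretely, I would show: for every state $q\in Q_1\cup Q_2$ and every $v=\alpha_1\cdots\alpha_m\in\sig^*$, writing $r=\btr_A(q,v,0^{Q_2})$, the transducer $\stt$ started in state $a_q$ on input $\sps\alpha_1\sps\alpha_2\cdots\sps\alpha_m\$$ produces exactly $\{r\}$ if $r$ succeeds and $\emptyset$ otherwise, whereas started in $f_q$ on the same input it produces $\emptyset$ if $r$ succeeds and exactly $\{r\}$ otherwise. Since part~1 of Definition~\ref{defn:transduc} sends $q_0'$ on the initial $\$$ nondeterministically to $a_{q_0}$ and $f_{q_0}$, applying this claim with $q=q_0$ and $v=w$ gives $\stt(\textit{dec}(w))=\{\btr_A(w)\}$: exactly one of the two branches survives, and it outputs $\btr_A(w)$. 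The induction can be carried out purely on $|w|$, and only for the argument $C=0^{Q_2}$, precisely because $A$ is $\delta_2$-flattened: since $\delta_2$ maps into $Q_1^*$, any $\delta_2$-transition is immediately followed by a $\delta_1$-transition that resets the third parameter of $\btr_A$, so no $\eps$-cycle occurs and the value of $C$ never influences any subtree.

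In the base case $m=0$ the input is just $\$$, and everything is read off directly from the base clause of Definition~\ref{defn:btr} together with part~4 of Definition~\ref{defn:transduc}; this is also the place where one uses that the flattening sets $F'=\{q\mid d(q,0^{Q_2})\cap F\neq\emptyset\}$, so that a $Q_2$-state all of whose $\delta_2'$-successors are non-final is itself non-final — which is exactly what makes the transition $f_q\xto{\$}q[q_1[\rej],\dots,q_n[\rej]]$ faithful when $q$ is not final. For the inductive step, a leaf at $a_q$ or $f_q$ with $q\in Q_1$ first absorbs the leading $\sps$ using the universal self-loop $q\xto{\sps}q$ and then reads $\alpha_1$ via part~2, which mirrors the $Q_1$-clause of Definition~\ref{defn:btr}; the induction hypothesis applied to $\delta_1(q,\alpha_1)$ on $\sps\alpha_2\cdots\sps\alpha_m\$$ closes this case, noting that $f_q$ always has a transition (part~2(a) or 2(b)) whereas $a_q$ is stuck exactly when $\delta_1(q,\alpha_1)$ is undefined, matching that $q[\rej]$ does not succeed. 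A leaf at $q\in Q_2$ meeting the $\sps$ cannot keep the symbol for later (states in $Q_2$ have no $\Sigma$-transition), so before the next real symbol it must apply part~3: $f_q$ is forced to the full fan-out $q[f_{q_1},\dots,f_{q_\ell}]$, while $a_q$ makes a choice $q[f_{q_1},\dots,f_{q_i},a_{q_{i+1}}]$; since $A$ is flattened, all the new children lie in $Q_1$, so each of them next reads $\alpha_1$ and the induction hypothesis applies to each on $\sps\alpha_2\cdots\sps\alpha_m\$$.

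I expect the heart of the argument — and the step requiring the most care — to be showing that the choice at a $Q_2$-leaf is forced and matches the truncation in Definition~\ref{defn:btr}. By the induction hypothesis, the subtree grown below an $f$-labeled child survives iff the corresponding $r_i=\btr_A(q_i,v,C_{q\mapsto i})$ fails, and the subtree below an $a$-labeled child survives iff that $r_i$ succeeds, and a transducer run produces output iff every one of its open leaves survives. Hence the choice $i$ at $a_q$ yields a non-empty result exactly when $r_1,\dots,r_i$ all fail and $r_{i+1}$ succeeds, i.e.\ exactly when $i+1$ is the least index at which $r_{i+1}$ succeeds, and then the produced tree is $q[r_1,\dots,r_{i+1}]$ — precisely the truncated tree of Definition~\ref{defn:btr}; all other choices die, so $a_q$'s total output is that singleton. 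If no $r_i$ succeeds, every choice at $a_q$ dies (so $a_q$ yields $\emptyset$, as it should since $r$ then fails), whereas the forced fan-out at $f_q$ survives with output $q[r_1,\dots,r_\ell]$, the untruncated tree $r$. Here it is essential that $\bar r$ in the flattening preserves the left-to-right order of occurrences (otherwise truncation at the first success could drop the wrong subtrees) and that the repeated copies of identical subtrees created by the flattening do not spoil the correspondence — exactly the point the remark following the lemma highlights.

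Finally, for the second sentence I would re-run the same induction with the canonical string $\sps\alpha_1\cdots\sps\alpha_m\$$ replaced by an arbitrary decoration of $\alpha_1\cdots\alpha_m$. The new phenomena are: a block of more than one $\sps$, which is absorbed harmlessly by the universal $\sps$-self-loop (a $Q_2$-leaf may delay, but must still perform, its part-3 expansion before the next real symbol, or die); a missing $\sps$ before a real symbol at a $Q_2$-leaf, which kills that run; and a $\$$ in the interior, which fires a part-4 transition on every open leaf, producing $\acc$- or $\rej$-leaves and thereby freezing those branches out of step with the computation, so that generically the following real symbol has no matching transition and the run dies. In every run that does survive, the initial $\$$ still passes control to $a_{q_0}$ or $f_{q_0}$ and the choices at $Q_2$-leaves are forced exactly as before, so the output, when non-empty, is again $\{\btr_A(w)\}$, and otherwise it is $\emptyset$. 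This refinement is routine but case-heavy, so I would keep the detailed verification of the correct-decoration statement as the technical core and treat the arbitrary-decoration part as a systematic check on top of it.
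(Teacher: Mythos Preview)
Your proposal is correct and takes essentially the same approach as the paper: both exploit $\delta_2$-flattening to keep $C=0^{Q_2}$ throughout, establish by case analysis that $f_q$ produces exactly the failing backtracking tree and $a_q$ exactly the succeeding one, peel off the initial $\$$ via the two rules $q_0'\xto{\$}a_{q_0}$ and $q_0'\xto{\$}f_{q_0}$, and then handle incorrect decorations by noting that extra $\sps$ are absorbed by the universal self-loop while misplaced $\$$ or missing $\sps$ kill the run. The paper organizes the induction slightly differently---doing all failing runs first, then all accepting runs, and explicitly carrying the inductive suffix in both the forms $\sps\alpha_1\sps\cdots\sps\alpha_n\$$ and $\alpha_1\sps\cdots\sps\alpha_n\$$ (which you need as well, since after a part-3 expansion the $Q_1$-children face the latter)---but the substance is identical.
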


\begin{proofsketch}
  First, notice how $A$ being $\delta_2$-flattened impacts
  $\btr_A$. The flattening ensures that there is no way to take two
  $\eps$-transitions in a row in $A$, meaning that every time case~2 of Definition~\ref{defn:btr}
  applies, we have $C(q) = 0$ since the previous
  step is either the initial call or a call from case~1 where $C$ gets
  reset. As such we will have $C = 0^{Q_2}$ in every recursive call
  below. Let $\stt_q$ denote the string-to-tree transducer $\stt$ with the initial
  state $q$ (instead of $q_0$).

  Let $v=\$\sps \alpha_1\sps \alpha_2 \sps \cdots \sps \alpha_n
  \$$. Establishing that $\stt(\textit{dec}(w)) = \{\btr_A(w)\}$ merely requires
  a straightforward case analysis the details of which we leave out due to space limitations.
  Starting with the case where the backtracking run on $w$
  \emph{fails}, the analysis establishes that for rejecting backtracking runs
  $t=\btr_A(q,w,0^{Q_2})$, we have $t\in \stt_{f_q}(v)$, for all $q$,
  where $v$ equals $\textit{dec}(w)$ with the initial $\$$ removed (we will
  deal with this at the end) and, vice versa, $t\in
  \stt_{f_q}(v)$ is true for exactly one $t$, so $t=\btr_A(q,w,0^{Q_2})$.

  The proof for the accepting runs follows very similar lines, but with
  the extra wrinkle of how $Q_2$ rules are handled when some path
  accepts. The invariant that $t \in \stt_{a_q}(v)$ is true for at
  most one $t$ is maintained however, as is, of course, the parallel
  to $\btr_A$. Again, the proof shows that $\stt_{a_q}(v)$ outputs precisely one tree if $v$ is $\textit{dec}(w)$
  with the initial $\$$ removed. That initial $\$$ is now used by the
  initial rules in $\stt$: $q_0' \xto{\$} a_{q_0}$ and $q_0' \xto{\$}
  f_{q_0}$. This means that $\stt$ produces exactly one tree for every
  $\textit{dec}(w)$, and in both the accepting and rejecting case it matches the
  tree from $\btr_A$.

  Finally, we need to deal with \emph{incorrect} decorations. Let $v$
  be a decoration of $w$ which is not $\textit{dec}(w)$. If $v$ has no
  leading $\$$, or no trailing $\$$, or has a $\$$ in any other
  position, $\stt(v)=\emptyset$, since $\stt$ has no other possible
  rules for $\$$. If $v$ contains \emph{extraneous} $\sps$ we still
  have $\stt(v) = \{\btr_A(w)\}$, since they will just be consumed by
  $q \xto{\sps} q$ rules. If some $\sps$ is ``missing'' compared to
  $\textit{dec}(w)$ this either causes $\stt(v) = \emptyset$, if a
  $Q_2$ rule needed it, or $\stt(v) = \{\btr_A(w)\}$, if it is just
  removed by a $q\xto{\sps} q$ rule anyway.
\end{proofsketch}

\begin{theorem}
  \label{thm:exptime-dec}
  It is decidable in exponential time whether a given pNFA $A$ has
  exponential backtracking.
\end{theorem}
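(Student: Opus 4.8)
The plan is to chain together the three technical lemmas established above. Given a pNFA $A$, I would first pass to its $\delta_2$-flattening $A'$. By Lemma~\ref{lemma:flat-is-ok} this takes polynomial time, $\L(A')=\L(A)$, and $f_{A'}(n)=\max\{|\btr_{A'}(w)|\mid |w|\le n\}$ grows exponentially if and only if $f_A$ does; so it suffices to decide exponential backtracking for $A'$, which additionally satisfies $\delta_2\colon Q_2\to Q_1^*$. Next, apply Definition~\ref{defn:transduc} to $A'$ to obtain a string-to-tree transducer $\stt$; as remarked after Definition~\ref{defn:delta2-flat} the flattened automaton has size polynomial in $|A|$, and inspecting Definition~\ref{defn:transduc} shows $\stt$ then has size polynomial in $|A'|$, so $\stt$ can be written down in polynomial time overall.

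The heart of the argument is to observe that the output-size growth function of $\stt$ and the backtracking-size growth function of $A'$ agree up to a linear rescaling of the argument, and that ``grows exponentially'' (i.e.\ membership in $2^{\Omega(n)}$) is invariant under such rescaling. Every input string over $\sig'=\sig\uplus\{\sps,\$\}$ is a decoration of its projection $w$ onto $\sig$, so by Lemma~\ref{lemma:transduc-is-ok} each set $\stt(u)$ is either empty or $\{\btr_{A'}(w)\}$; since decorations only insert symbols, $|w|\le|u|$, and hence $f_\stt(n):=\max\{|t|\mid t\in\stt(u),\ |u|\le n\}\le f_{A'}(n)$. Conversely, for every $w$ with $|w|\le n$ the correct decoration $\textit{dec}(w)$ has length linear in $|w|$ (namely $2|w|+2$) and $\btr_{A'}(w)\in\stt(\textit{dec}(w))$, so $f_{A'}(n)\le f_\stt(2n+2)$. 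Combining the two inequalities, $f_\stt\in2^{\Omega(n)}$ iff $f_{A'}\in2^{\Omega(n)}$ iff $f_A\in2^{\Omega(n)}$, which by Definition~\ref{defn:constr-static} is exactly the statement that $A$ has exponential backtracking.

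Finally I would invoke Lemma~\ref{lem:exp-output-decide} on $\stt$: it decides in deterministic exponential time (in $|\stt|$, hence in $|A|$) whether $\max\{|t|\mid t\in\stt(s),\ |s|\le n\}$ grows exponentially, which by the previous paragraph is precisely the question of whether $A$ has exponential backtracking. The two preprocessing steps run in polynomial time, so the entire procedure runs in exponential time, proving the theorem.

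The genuinely substantive work has already been discharged in Lemmas~\ref{lemma:flat-is-ok}, \ref{lemma:transduc-is-ok}, and \ref{lem:exp-output-decide}. The main point still requiring care is the book-keeping in the second paragraph: one must check that extraneous $\sps$-padding (harmlessly absorbed by the $q\xto{\sps}q$ rules) never inflates an output tree, that misplaced or missing $\$$-symbols only ever make $\stt(u)$ empty, and that the linear gap between $n$ and the decoration length is truly invisible to the $2^{\Omega(n)}$ dichotomy. None of these is difficult, but they must be spelled out because $\stt$'s input alphabet is strictly larger than $A$'s, so ``$|w|\le n$'' and ``$|\textit{dec}(w)|\le n$'' are not the same condition.
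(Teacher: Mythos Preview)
Your proposal is correct and follows essentially the same route as the paper: flatten $A$ to $A'$ via Lemma~\ref{lemma:flat-is-ok}, build the transducer $\stt$ of Definition~\ref{defn:transduc}, use Lemma~\ref{lemma:transduc-is-ok} to equate exponential output growth of $\stt$ with exponential backtracking of $A'$, and then invoke Lemma~\ref{lem:exp-output-decide}. The paper compresses your second paragraph into a single sentence, so your explicit treatment of the linear gap between $|w|$ and $|\textit{dec}(w)|$ and of what happens on incorrect decorations is a useful elaboration rather than a deviation.
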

\begin{proof}
  From $A$, construct the $\delta_2$-flattened pNFA $A'$ according to
  Definition~\ref{defn:delta2-flat}. According to
  Lemma~\ref{lemma:flat-is-ok}, $A'$ can be constructed in polynomial time,
   and it has exponential backtracking if
  and only if $A$ has. Construct the transducer $\stt$ for $A'$ according to
  Definition~\ref{defn:transduc}. By Lemma~\ref{lemma:transduc-is-ok}
  $\stt$ outputs exponentially large trees if and only if
  $A'$ has exponential backtracking. The construction of $\stt$ can clearly
  be implemented to run in polynomial time. Hence,
  Lemma~\ref{lem:exp-output-decide} yields the result.
\end{proof}

\subsection{Hardness of General Backtracking Analysis}

It seems likely that general backtracking analysis is computationally
difficult. We cannot prove this yet, but here
we demonstrate that either it is hard to decide if $\Jp(E)$ has exponential backtracking \emph{or} the class of
regular expressions $E$ such that $\Jp(E)$ does \emph{not} have exponential backtracking
has an easy universality decision problem. In the following, we say that $E$ has exponential backtracking if $\Jp(E)$ does.

Let us briefly recall the universality problem.

\begin{definition}  
  A regular expression $E$ is \emph{$\sig$-universal} if $\Sigma^*\subseteq\L(E)$.
  The input of \emph{RE Universality} is an alphabet $\Sigma$ and a regular expression $E$ over $\Sigma$. The question asked is whether $\L(E)$ is $\sig$-universal.
\end{definition}

This problem is well-known to be PSPACE-complete. See
e.g.~\cite{garey:79}. We will now give a simple polynomial reduction
which takes a regular expression $E$ and constructs a new regular
expression $E'$ such that $E'$ has exponential backtracking if $E$ has
exponential backtracking \emph{or} $E$ is
not universal.

\begin{lemma}  
  Let $E$ be a regular expression over $\sig$, $\alpha\in\sig$, and $\Gamma=\sig\cup\{\$\}$ for some $\$\notin\sig$.
  If $E$ does not have exponential backtracking then $E'= ((E\Or E\$\Gamma^*)\Or (\Sigma^*\$(\alpha^*)^*\$)$
  has exponential backtracking if and only if $E$ is not
  $\sig$-universal.
\end{lemma}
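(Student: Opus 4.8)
The plan is to analyze the three alternatives of $E'$, which $\Jp(E')$ attempts against the \emph{whole} input string in priority order: first $E$, then $E\$\Gamma^*$, and finally $D:=\Sigma^*\$(\alpha^*)^*\$$. Two observations drive the argument. First, $D$ is a ``time bomb'': on an input $u\$\alpha^n$ with $u\in\sig^*$, once the matcher has consumed $u$ by the leading $\Sigma^*$ and matched the literal $\$$, it runs $(\alpha^*)^*$ against $\alpha^n$ with the trailing literal $\$$ as continuation; since that continuation fails both on an $\alpha$ and at end of input, every one of the $2^{\Theta(n)}$ compositions of $n$ is explored before rejection — exactly the catastrophic backtracking of $(\alpha^*)^*$ discussed earlier. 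Second, $E\$\Gamma^*$ is a ``universality guard'': on an input $x$ that contains $\$$, writing $x=u\$v$ with $u$ the maximal $\$$-free prefix, this alternative matches $x$ if and only if $u\in\L(E)$ — because $E$ cannot consume any $\$$, it must match exactly $u$, after which the literal $\$$ matches and $\Gamma^*$ greedily absorbs $v$ in linear time. (The first alternative $E$ is what would, in the complementary case not covered by this lemma, transfer $E$'s own exponential backtracking to $E'$; here it merely serves to handle $\$$-free inputs cleanly.)

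For the direction ``$E$ not $\sig$-universal $\Rightarrow$ $E'$ has exponential backtracking'', I would fix a witness $u\in\sig^*\setminus\L(E)$ and take the family $w_n=u\$\alpha^n$. Since $w_n$ contains $\$$, the first alternative $E$ fails after at most polynomially many steps (reading $\$$ is undefined in $\Jp(E)$, and $E$ has no exponential backtracking), and the second alternative $E\$\Gamma^*$ fails too, by the guard observation and $u\notin\L(E)$, again with only polynomially many steps. Hence the matcher proceeds to $D$, whose leading $\Sigma^*$ is forced to match exactly $u$ (matching less makes the literal $\$$ fail on a $\sig$-symbol), and the time-bomb behaviour gives $|\btr_{\Jp(E')}(w_n)|\in 2^{\Omega(n)}$ while $|w_n|=\Theta(n)$. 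By Definition~\ref{defn:constr-static}, $E'$ has exponential backtracking.

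For the converse I would prove the contrapositive: assume $E$ is $\sig$-universal and bound $\btr_{\Jp(E')}(x)$ for every $x\in\Gamma^*$. If $x\in\sig^*$, then $x\in\L(E)=\sig^*$, so the first alternative $E$ succeeds after only polynomially many steps and nothing further is explored. If $x$ contains $\$$, write $x=u\$v$ with $u$ the maximal $\$$-free prefix; then $u\in\sig^*=\L(E)$, so by the guard observation the second alternative $E\$\Gamma^*$ succeeds (polynomially bounded backtracking in the $E$-part, then linear work in $\Gamma^*$), and $D$ is never reached. In either case $|\btr_{\Jp(E')}(x)|$ is bounded by $O(f_E(|x|)+|x|)$, where $f_E$ is the backtracking-size function of $\Jp(E)$; since $f_E\notin 2^{\Omega(n)}$ by hypothesis, neither is the backtracking-size function of $\Jp(E')$, so $E'$ does not have exponential backtracking.

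The main obstacle is making the two observations precise inside the backtracking-run formalism of Definition~\ref{defn:btr}: one must check that the $\eps$-cycle bookkeeping parameter $C$ in $\Jp(D)$ does not collapse the exponential blow-up of $(\alpha^*)^*$ — it does not, because reading each $\alpha$ resets $C$, so the branching between ``extend the current inner iteration'' and ``start a fresh outer iteration'' survives and yields a recurrence of the form $T(m)=2T(m-1)+O(1)$ over the remaining $\alpha$'s — and that the greedy $\Sigma^*$ and the concatenations in $D$ add only linear overhead, so they spoil neither the $2^{\Omega(n)}$ lower bound (non-universal case) nor the subexponential upper bound (universal case). The remaining bookkeeping — transferring ``$E$ has no exponential backtracking'' to the $E$- and $E\$\Gamma^*$-parts of $\Jp(E')$, using that $E$ cannot read $\$$ — is routine.
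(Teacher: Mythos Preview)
Your argument follows the same line as the paper's proof: the third alternative $D=\Sigma^*\$(\alpha^*)^*\$$ is the exponential ``time bomb'', the first two alternatives together accept all of $\Gamma^*$ precisely when $E$ is $\sig$-universal (so that $D$ is reached only when $E$ is not universal), and a witness $u\notin\L(E)$ yields the exponential family $u\$\alpha^n$. Your write-up is considerably more detailed than the paper's rather terse proof---in particular your explicit ``guard'' and ``time-bomb'' observations and the case split on whether the input contains $\$$---but the decomposition and the key ideas are the same.
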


%
\begin{proof}
  If $E$ does not have exponential backtracking then
  neither does $E\$\Gamma^*$, since $\Gamma^*$ never fails. Now, let $A=\Jp(E')$.
  For every input string, the backtracking run of $A$ will attempt to match
  $\Sigma^*\$(\alpha^*)^*\$$ to the string only if neither $E$ nor
  $E\$\Gamma^*$ matches it. If $E$ is universal, i.e.\ equal to
  $\Sigma^*$, then $\L(E|(E\$\Gamma^*))=\L(\Sigma^*|(\Sigma^*\$\Gamma^*))=\Gamma^*$
  (since a string in $\Gamma^*$ is either in $\Sigma^*$ or has a prefix in $\sig^*$
  followed by a suffix in $\Gamma^*$ that begins with a $\$$). Hence, in this case
  $E'$ has exponential backtracking if and only if $E$ does.

  If we instead assume that $E$ is \emph{not} universal, then there
  exists some $w\in \Sigma^*$ such that $w \notin \L(E)$. Consider
  the string $w\$\alpha^n$ for any $n\in \nat$. Neither $E$
  nor $E\$\Gamma^*$ matches it, which means that backtracking will proceed into
  $\Sigma^*\$(\alpha^*)^*\$$, where $2^n$ backtracking attempts will
  be made to match the suffix $\alpha^n\$$ to the subexpression $(\alpha^*)^*\$$
   (as the final $\$$ keeps failing to match).
\end{proof}

The previous lemma yields the following corollary.

\begin{corollary}
  Let $\mathcal E$ be the set of all regular expressions that do \emph{not}
  have exponential backtracking. Then either RE Universality is not PSPACE-hard for inputs in $\mathcal E$, \emph{or} deciding whether regular expressions have exponential backtracking is PSPACE-hard.
\end{corollary}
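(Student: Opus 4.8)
The plan is to prove the disjunction by contraposition: I assume that RE Universality \emph{is} PSPACE-hard for inputs in $\mathcal E$, and derive that deciding whether a regular expression has exponential backtracking is PSPACE-hard. The only ingredients are the reduction $E\mapsto E'$ of the previous lemma, the fact that it is computable in polynomial time, and the closure of PSPACE under complement.

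First I would record precisely what the lemma gives for $E\in\mathcal E$: the map $E\mapsto E'$ is computable in polynomial time, and $E'$ has exponential backtracking if and only if $E$ is not $\sig$-universal. Hence $E\mapsto E'$ is a polynomial-time many-one reduction from the problem ``given $E\in\mathcal E$, is $E$ \emph{not} $\sig$-universal?'' to the problem ``given a regular expression, does it have exponential backtracking?''. Note that $E'$ itself need not lie in $\mathcal E$; this is harmless, because the target problem is defined --- and, by Theorem~\ref{thm:exptime-dec}, decidable --- for arbitrary regular expressions.

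Next I would compose reductions. Under the assumption, some PSPACE-complete language $L$ reduces in polynomial time to RE Universality via a map whose outputs all lie in $\mathcal E$. Composing that map with $E\mapsto E'$ yields a polynomial-time reduction from the complement $\overline L$ to the exponential-backtracking problem, so the latter is hard for the class co-PSPACE. Since PSPACE is closed under complement (equivalently, since $A\le_p B$ implies $\overline A\le_p\overline B$ and $\overline L$ is again PSPACE-complete), the exponential-backtracking problem is PSPACE-hard --- which is the second disjunct --- and we are done.

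There is no genuinely difficult step here: the corollary is essentially a bookkeeping consequence of the lemma. The two points that deserve a line of justification are the complementation argument just used and the observation that the lemma's hypothesis ``$E$ does not have exponential backtracking'' is met automatically, since the reduction is applied only to inputs taken from $\mathcal E$.
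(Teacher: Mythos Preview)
Your argument is correct and is precisely the reasoning the paper has in mind: the paper does not spell out a proof at all but simply states that ``the previous lemma yields the following corollary,'' and your contraposition via the polynomial-time map $E\mapsto E'$ together with $\mathrm{PSPACE}=\mathrm{co\mbox{-}PSPACE}$ is exactly how that sentence unpacks. There is nothing to add or correct.
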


\subsection{The Complexity of Failure Backtracking Analysis}

Now we look at the problem to decide whether a given pNFA has
exponential failure backtracking (see
Definition~\ref{defn:constr-static}).  For reasons of technical simplicity, assume that parallel $\eps$-transitions are absent from pNFA in this section.
To simplify the exposition in this section, and to obtain a useful notion of ambiguity for NFA with $\eps$-cycles, we restrict our notion of accepting runs of an NFA, 
as originally defined in Section~\ref{prelim}. Consider a run $p_1\cdots p_{m+1}$ on an input string $w=\beta_1\cdots\beta_m\in\sig^*$. This run is called \emph{short} if there are no $i,j$, $1\le i < j \le m$, such that $\beta_i=\ldots=\beta_{j}=\eps$, $p_i=p_j$, and $p_{i+1} = p_{j+1}$.  Thus, a short run must not contain any $\eps$-cycle in which an $\eps$-transition appears twice.

First we recall definitions from \cite{Mohri:08} on ambiguity for NFA, but for NFA with $\eps$-cycles. These definitions differ from those in  \cite{Mohri:08}, due to the fact that we allow $\eps$-cycles by using short accepting runs.
We define
the \emph{degree of ambiguity} of a string $w$ in $N$, denoted by $\da(N, w)$, 
to be the number of short accepting runs in $N$ labeled by $w$. 
$N$ is \emph{polynomially ambiguous} if there exists a polynomial $h$ such that $\da(N, w) \le h(|w|)$ for all $w\in\sig^*$. The minimal degree of such a polynomial is the 
\emph{degree of polynomial ambiguity} of $N$. We call $N$ \emph{exponentially ambiguous} if $g(n)=\max_{|w|\le n}\da(N, w) \in 2^{\Omega(n)}$ (or equivalently, if $g(n) \in 2^{\Theta(n)}$). It follows from Proposition~1 of \cite{Mohri:08} that $N$ is either polynomially or exponentially ambiguous, i.e., there is nothing in between. To be precise, this concerns only NFA without $\eps$-cycles, but as the proof of the following theorem shows, it extends to our more general case.

\begin{theorem}\label{thm:ambiguity}
 For an NFA $N$ it is decidable in time $O(|N|^3_E)$ whether $N$ is polynomially ambiguous, where $|N|_E$ denotes the number of transitions of $N$. If $N$ is polynomially ambiguous, the degree of
 polynomial ambiguity can be computed in time $O(|N|^3_E)$.
\end{theorem}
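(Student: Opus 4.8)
The plan is to reduce the statement to the corresponding results of Mohri~\cite{Mohri:08} (Proposition~1 and the associated polynomial-time decision procedures for the IDA -- ``infinite degree of ambiguity'' -- and EDA conditions) by showing that passing from an NFA with $\eps$-cycles to one without such cycles, while preserving the notion of \emph{short} accepting runs, costs only a polynomial blow-up in the number of transitions. First I would make precise the combinatorial characterization of short accepting runs: since a short run forbids the repetition of any $\eps$-\emph{transition} inside an $\eps$-cycle, the $\eps$-behavior between two consecutive non-$\eps$ transitions is an $\eps$-path that uses each $\eps$-edge at most once. I would argue that such $\eps$-paths are exactly the simple paths in a suitably defined ``$\eps$-edge graph'', and hence that the contribution of each $\eps$-segment to the ambiguity count is governed by a finite (polynomially describable) gadget.

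Concretely, the main construction is to build from $N$ an $\eps$-cycle-free NFA $N'$ with $|N'|_E = O(|N|_E^{c})$ for a small constant $c$, together with a bijection between short accepting runs of $N$ on $w$ and (ordinary) accepting runs of $N'$ on $w$, so that $\da(N,w)$ equals the degree of ambiguity of $N'$ on $w$ in the classical sense. One clean way to do this: form the product of $N$ restricted to its $\eps$-transitions with a bounded counter / subset structure recording which $\eps$-edges have already been used along the current $\eps$-segment, but bounding the recorded information so the state space stays polynomial -- the key point being that we do not need to remember the whole used-edge set, only enough to detect a repeated $\eps$-edge, which (because each repetition already witnesses IDA/EDA in the ambiguity sense) can be handled by at most doubling. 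Alternatively, and perhaps more smoothly, I would invoke the standard $\eps$-elimination that contracts $\eps$-paths into direct symbol transitions, but keep two ``copies'' of each $\eps$-reachability edge so that the at-most-twice usage permitted by short runs is faithfully mirrored -- this is the discrete analogue of the $\bar r$ map used in Definition~\ref{defn:delta2-flat} and Lemma~\ref{lemma:flat-is-ok}, and indeed the same ``keep the first two occurrences'' idea applies.

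Once $N'$ is in hand, I would verify three things: (i) $\L(N')=\L(N)$ and short accepting runs of $N$ on $w$ correspond bijectively to accepting runs of $N'$ on $w$, whence $\da(N,w)$ is exactly the classical degree of ambiguity of $N'$ on $w$; (ii) consequently, by Mohri's Proposition~1 applied to the $\eps$-cycle-free automaton $N'$, the function $g(n)=\max_{|w|\le n}\da(N,w)$ is either polynomially or exponentially bounded, with no intermediate growth, which justifies the dichotomy asserted before the theorem; and (iii) the running-time bound. For (iii), Mohri's algorithm decides polynomial ambiguity of an $\eps$-free NFA $M$ in time $O(|M|_E^3)$ and, in the polynomial case, computes the exact degree in the same time; since $|N'|_E$ is polynomial in $|N|_E$, we get a polynomial-time algorithm. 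To land on the stated $O(|N|_E^3)$ rather than a larger polynomial, I would make the $\eps$-elimination construction linear (or near-linear) in $|N|_E$ -- using the ``two copies of each $\eps$-reachability edge'' trick this is feasible, since $\eps$-reachability with a usage bound of $2$ can be computed by a bounded search -- so that $|N'|_E=O(|N|_E)$ and Mohri's cubic algorithm on $N'$ yields overall $O(|N|_E^3)$.

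The main obstacle I anticipate is (i) combined with the size bound in (iii): getting a faithful, order-respecting bijection between short accepting runs and runs of an $\eps$-cycle-free automaton \emph{while keeping the automaton only linearly larger}. The subtlety is that ``short'' bans a repeated $\eps$-\emph{edge}, not a repeated state or a repeated $\eps$-configuration, so the naive product with a ``set of used $\eps$-edges'' is exponential; one must exploit that a single repeated $\eps$-edge already certifies infinite ambiguity (hence exponential growth) to collapse all the bookkeeping into a two-copy gadget, analogous to $\bar r$. Checking that this collapse neither creates nor destroys short accepting runs -- in particular that it respects the exact count $\da(N,w)$ and not merely the yes/no of ambiguity -- is the delicate step, and is exactly where the remark after Lemma~\ref{lemma:flat-is-ok} about $\bar r$ preserving the left-to-right order of occurrences is the relevant guiding principle.
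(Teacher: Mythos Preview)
Your overall direction---reduce to Mohri's $\eps$-cycle-free results---matches the paper's, but the specific reduction you propose has a genuine gap. You want an $\eps$-cycle-free automaton $N'$ whose runs are in \emph{bijection} with the short accepting runs of $N$, and of size $O(|N|_E)$. This cannot be done in general: within a single $\eps$-strongly-connected component, a short $\eps$-segment is exactly a trail (edge-simple walk) in the $\eps$-graph, and the number of $\eps$-trails between two states can be exponential in the number of $\eps$-edges. Any automaton that reproduces this count exactly must, in effect, record which $\eps$-edges have been used. The ``two copies'' idea borrowed from $\bar r$ in Definition~\ref{defn:delta2-flat} does not rescue this---that trick preserves exponential-versus-not growth of backtracking \emph{trees}, not exact run counts, and even that only via the separate Lemma~\ref{lemma:slender-exp-tree}. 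Your remark that ``a single repeated $\eps$-edge already certifies infinite ambiguity'' is also not right as stated: a repeatable $\eps$-edge witnesses exponential ambiguity only if the $\eps$-segment containing it lies on a cycle that can itself be traversed unboundedly often; otherwise it contributes only a bounded multiplicative factor.

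The paper sidesteps the bijection problem entirely. It takes the quotient $N''$ of $N$ by the $\eps$-SCC equivalence $\sim$ (after an ad~hoc $6$-state fix for two-element SCCs lacking self-loops), removes $\eps$-self-loops, and lets $Z$ be the set of quotient states arising from nontrivial $\eps$-components. It then splits into two cases: either (i) some cycle in $N''$ passes through a state of $Z$, in which case $N$ is exponentially ambiguous directly, since each traversal of that cycle can choose between at least two distinct short $\eps$-paths inside the corresponding component; or (ii) no such cycle exists, so every run of $N''$ meets $Z$ at most a bounded number $k$ of times, and the short-accepting-run count of $N$ and the accepting-run count of the $\eps$-cycle-free $N''$ differ only by a \emph{constant factor}. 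A constant factor---rather than a bijection---is precisely enough to inherit both the polynomial/exponential dichotomy and the degree of polynomial ambiguity from Mohri's theorem applied to $N''$, and since $|N''|_E = O(|N|_E)$ the $O(|N|_E^3)$ bound follows.
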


\begin{proof}
If $N$ is $\eps$-cycle free, the result follows from Theorems 5 and 6 in \cite{Mohri:08}. Now let $N=(Q,\sig,q_0,\delta, F)$ be an NFA, potentially with $\epsilon$-cycles, and define the equivalence relation $\sim$ on $Q$, where 
$p\sim q$ if and only if they are in the same strongly connected component determined by using only $\eps$-transitions in $N$.
Let $N' := N/\sim$ be the quotient of $N$ by $\sim$, having as states the equivalence classes of $\sim$.

The correctness of the remainder of the argument requires $N$ not to have equivalence classes with two elements, say $p, q$, where both $p$ and $q$ do not have $\eps$ self-loops. 
We briefly argue how equivalences classes of this form can be removed without changing the ambiguity properties of $N$. 
It is tedious, but straightforward, to verify that this can for example be achieved by replacing $p$ and $q$ (and $p\xrightarrow\eps q$,  $q\xrightarrow\eps p$) with 6 states and the appropriately defined $\eps$-transitions to model
the behavior of short runs in $N$ that go through one or both consecutively of $p$ and $q$. Three of the 6 states are used to model incoming transitions to $p$ in (short) runs that after reaching $p$ do not follow $p\xrightarrow\eps q$, or follow 
only $p\xrightarrow\eps q$, or follow consecutively $p\xrightarrow\eps q$ and $q\xrightarrow\eps p$, and the other 3 states are used for $q$ in a similar way. 

$N'$ could potentially have (parallel) $\eps$ self-loops. Let $N''$ be $N'$ with $\eps$ self-loops removed.
Each state in $N''$ will belong to exactly one of the following categories of equivalence classes:
(a) a single state of $N$ without an $\eps$ self-loop in $N$;
(b) a single state of $N$ with an $\eps$ self-loop in $N$;
(c) at least two states such that, in $N$, there are at least two distinct $\eps$-runs (staying within the equivalence class)  between any two states in the equivalence class (thanks to the modification of $N$ described in the preceding paragraph). 

Let $Z$ be the set of states in $N''$ having the properties specified in (b) or (c).
In $N''$ there are two possibilities.
Either (i) there is a (short run which is a) cycle in $N''$ having at least one state in $Z$, or (ii) each short run in $N''$ goes through at most $k$ states in $Z$ 
($k$ is bounded by the number of states in $N''$).
In case (i), $N$ is exponentially ambiguous, since we have at least two $\eps$-runs in $N$ between any two states in an equivalence class in $Z$.
In case (ii), the number of accepting runs in $N''$ (by definition without $\eps$-cycles) and number of short accepting runs in $N$, 
differ by a constant factor, and we can apply the $\eps$-cycle free result from \cite{Mohri:08} to $N''$.
\end{proof}

\begin{theorem}\label{thm:unconstraint}
A pNFA $A$ has either polynomial or exponential failure backtracking. It can be decided in time $O(|A|^3_E)$ whether $A$ has polynomial failure backtracking, and if so, the degree of backtracking can be computed in time $O(|A|^3_E)$.

\end{theorem}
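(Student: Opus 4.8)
The plan is to reduce the question of exponential versus polynomial failure backtracking of $A$ to the question of exponential versus polynomial ambiguity of a suitable NFA, and then to invoke Theorem~\ref{thm:ambiguity}. We work with $A^f$, i.e.\ $A$ with $F=\emptyset$. Since no backtracking run of $A^f$ ever succeeds, the search described by Algorithm~\ref{algo:match} never stops early, so the nodes of $\btr_{A^f}(w)$ are exactly the configurations it visits. Up to the $\rej$-leaves (of which each parent has only a bounded number) these configurations are in bijection with the \emph{short} run prefixes of $\nfa A$ that start in $q_0$ and read a prefix of $w$, where ``short'' is precisely the condition that no $\eps$-transition is repeated without a symbol being read in between --- this is exactly what the parameter $C$ in Definition~\ref{defn:btr} enforces, and it is why the short-run notion of ambiguity recalled before Theorem~\ref{thm:ambiguity} is the right one. (The standing assumption that parallel $\eps$-transitions are absent makes the alternatives of a $Q_2$-state and its $\eps$-transitions correspond bijectively, so that ``short run'' has the intended meaning.)

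I would then turn this into a statement about ambiguity. Build an NFA $N$ from $\nfa A$ by adding one new accepting sink state $s$ with $s\xto{\alpha}s$ for each $\alpha\in\sig$ and an $\eps$-transition $q\xto{\eps}s$ out of every state $q$, keeping $q_0$ as the initial state and making no other state accepting. A short accepting run of $N$ on $w$ is then exactly a short run prefix $\pi$ of $\nfa A$ reading some prefix $u$ of $w$, extended by one edge $q\xto{\eps}s$ and the unique run of $s$ on the remaining suffix; as $s$ has no outgoing $\eps$-transition, this extension introduces neither $\eps$-cycles nor ambiguity of its own. Hence $\da(N,w)$ equals the number of short run prefixes of $\nfa A$ on $w$, and so $|\btr_{A^f}(w)|=\Theta(\da(N,w))$ with constants depending only on $A$. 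Taking the maximum over $|w|\le n$, the function $f$ from Definition~\ref{defn:constr-static} for $A^f$ satisfies $f(n)=\Theta(g(n))$ where $g(n)=\max_{|w|\le n}\da(N,w)$, and $|N|_E=O(|A|_E)$.

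The theorem now follows from Theorem~\ref{thm:ambiguity} applied to $N$. By (the $\eps$-cycle version of) Proposition~1 of \cite{Mohri:08}, $N$ is either polynomially or exponentially ambiguous, so $g$, and with it $f$, is either polynomially or exponentially bounded; this gives the dichotomy. If $N$ is exponentially ambiguous then $g\in 2^{\Omega(n)}$ and $A$ has exponential failure backtracking. Otherwise $N$ is polynomially ambiguous of some degree $d$; the structural characterization underlying \cite{Mohri:08} provides, for arbitrarily large $n$, strings of length $n$ with $\Omega(n^d)$ short accepting runs, so $g(n)=\Theta(n^d)$ and hence $f(n)=\Theta(n^d)$. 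Reading off Definition~\ref{defn:constr-static}, $A$ has polynomial failure backtracking of degree $d-1$ (the boundedly ambiguous case $d=0$, in which $f=\Theta(1)$, being the degenerate exception to the $k\in\nat$ convention). By Theorem~\ref{thm:ambiguity}, deciding polynomial ambiguity of $N$ and, if it holds, computing $d$ both take time $O(|N|^3_E)=O(|A|^3_E)$, which yields the stated bounds.

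The main obstacle is the quantitative correspondence $|\btr_{A^f}(w)|=\Theta(\da(N,w))$. One has to verify that the $C$-pruning in Definition~\ref{defn:btr} coincides with the short-run condition exactly, so that every visited configuration is accounted for and none is counted twice, and that the $\rej$-leaves and the new bail-out transitions to $s$ add only a bounded multiplicative overhead, so that the polynomial degree is not disturbed. A second, lighter point is that polynomial ambiguity ``of degree $d$'' must be shown to entail $g(n)=\Theta(n^d)$ and not merely $g(n)=O(n^d)$; for this one uses the extremal families of strings produced by the ambiguity characterization, together with the observation made in the proof of Theorem~\ref{thm:ambiguity} that this characterization carries over to NFA with $\eps$-cycles.
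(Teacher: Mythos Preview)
Your proposal is correct and follows essentially the same approach as the paper: the NFA $N$ you construct is exactly the paper's $a(\nfa A^f)$, and the reduction to Theorem~\ref{thm:ambiguity} is identical. The only difference is that the paper asserts the equality $\da(a(\nfa A^f),w)=|\btr_{A^f}(w)|$ outright, whereas you more cautiously argue a $\Theta$-equivalence and spell out why the $C$-pruning matches the short-run condition and why the polynomial degree is preserved; this extra care is not wrong, just more detailed than the paper's one-line claim.
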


\begin{proof}
Recall that $A^f$ denotes the pNFA obtained from $A$ where we change all states of $A$ so that they are not accepting, and
$\nfa A^f$ denotes the NFA obtained by ignoring priorities on transitions of $A^f$.
For an NFA $N$, $a(N)$ is obtained from $N$ by adding a new accepting sink state $z$ (having transitions to itself on all input letters), 
 all other states in $N$ are made non-accepting, and we add
$\eps$-transitions from all states in $N$ to $z$.
Since  $\da(a(\nfa A^f), w) = |\btr_{A^f}(w)|$, and thus
$ \max \{  \da(a(\nfa A^f), w) \mid w \in \sig^*, |w| \le n\} = \max \{ |\btr_{A^f}(w)| \mid w \in \sig^*, |w| \le n\}$,  the failure backtracking complexity of $A$ is 
equal to the ambiguity of $a(\nfa A^f)$.
To complete the proof, apply Theorem~\ref{thm:ambiguity} to $a(\nfa A^f)$.
\end{proof}





\section{Conclusion/Future Work}

Our prioritized NFA model is the only automata model, that we are aware of, which formalizes backtracking regular expression matching. This model is well suited to be extended to describe 
notions such as possessive quantifiers, captures and backreferences found in practical regular expressions.
Backreferences have been formalized in~\cite{Campeanu:03}, but without eliminating ambiguities due to multiple matches. 
Trying to improve our current complexity result for deciding backtracking complexity (as in Definition~\ref{defn:constr-static}), and 
secondly, to formalize what is meant by equivalence of a regular expression with a pNFA, will provide the impetus for future investigations.

\paragraph*{Acknowledgment} We thank the referees for extensive lists of valuable comments.
\bibliographystyle{eptcs}
\bibliography{main}

\comment{
\newpage
\appendix
\section{More Detailed Proofs}

In this appendix, we collect some proofs that had to be omitted or substantially shortened in the main part of the paper. For convenience, we re-state the results these proofs belong to.

\begin{faketheorem}{Lemma}{lemma:slender-exp-tree}
  Let $\sigma$ be a function on trees such that, for $t=f[t_1,\dots,t_k]$
  \[
  \sigma(t)=\left\{\begin{array}{ll}
  t&\text{if $k=0$}\\
  f[\sigma(t_1)]&\text{if $k=1$}\\
  f[\sigma(t_i),\sigma(t_j)]&\text{otherwise, where $t_i,t_j$ ($i\neq j$) are largest among $t_1,\dots,t_k$}.
  \end{array}\right.
  \]
  Let $T_0,T_1,T_2,\dots$ be sets of trees of rank at most $k$. Then
  the function $f(n)=\max\{|t|\mid t\in T_n\}$ grows exponentially if
  and only if $f'(n)=\max\{|\sigma(t)|\mid t\in T_n\}$ grows
  exponentially.
\end{faketheorem}

\begin{proof}
  The \emph{if} direction is obvious. Let us consider the \emph{only if} direction. Without loss of generality, we may assume that every node of a tree in $\bigcup_{n\in\nat}T_i$ is either a leaf or has at least two children (i.e.\ the second case of the definition of $\sigma(t)$ never applies). Now, the implication to be proved is equivalent to saying that $g'(n)=\max\{\ell(\sigma(t))|\mid t\in T_n\}$ grows exponentially if $g(n)=\max\{\ell(t)|\mid t\in T_n\}$ does, where $\ell(t)$ denotes the number of leaves of a tree $t$. However, among all trees $t$ of rank $k$ with a given number of leaves, the balanced $k$-ary trees $t$ are those which minimize $\ell(\sigma(t))$. Clearly, for such a tree, assuming for simplicity that it is fully balanced, we have $\ell(\sigma(t))=\ell(t)^c$, where $c=\log_k2$, which proves the statement.
\end{proof}

\begin{faketheorem}{Lemma}{lemma:transduc-is-ok}
  For a $\delta_2$-flattened pNFA $A$, the string-to-tree transducer
  $\stt$ as constructed by Definition~\ref{defn:transduc}, and an
  input string $w=\alpha_1 \cdots \alpha_n$, it holds that
  $\stt(\textit{dec}(w)) = \{\btr_A(w)\}$. For \emph{all} $u$ which
  are decorations of $w$ either $\stt(u) = \emptyset$ or $\stt(u) =
  \{\btr_A(w)\}$.
\end{faketheorem}

\begin{proof}
  First, notice how $A$ being $\delta_2$-flattened impacts
  $\btr_A$. The flattening ensures that there is no way to take two
  $\eps$-transitions in a row in $A$, meaning that every time case~2 of Definition~\ref{defn:btr}
  applies, we have $C(q) = 0$ since the previous
  step is either the initial call or a call from case~1 where $C$ gets
  reset. As such we will have $C = 0^{Q_2}$ in every recursive call
  below. Let $\stt_q$ denote the string-to-tree transducer $\stt$ with the initial
  state $q$ (instead of $q_0$).

  Let $v=\$\sps \alpha_1\sps \alpha_2 \sps \cdots \sps \alpha_n
  \$$. The proof will simply be a lengthy case analysis. For the most
  part the cases in the definition of $\btr_A$ have a very direct
  one-to-one relationship with what is done in $\stt$, with some
  details requiring clarification. We work our way backwards, starting
  with the subtrees generated when the empty string remains, $w'=\eps$
  and $v=\$$.

  We start with the case where the backtracking run on $w$
  \emph{fails}. We divide this into two cases.
  \begin{itemize}
  \item Let the remaining suffixes of $w$ and $v$ be $w'=\eps$ and
    $v'=\$$. For every state $q$ of $A$
    we have:
    \begin{itemize}
    \item When $q\in Q_1\setminus F$, let $t=q[\rej]$. Then we have
      $\btr_A(q,w,0^{Q_2}) = t$ and $t \in \stt_{f_q}(v')$.
    \item When $q\in Q_2 \setminus F$ and $q_1 \cdots q_n =
      \delta_2(q)$, let $t=q[q_1[\rej],\ldots,q_n[\rej]]$. Then it holds that
      $\btr_A(q,w',0^{Q_2}) = t$ and $t \in \stt_{f_q}(v')$.
    \end{itemize}
    These are by construction. Notice that in both cases $t$ is the
    \emph{only} tree in $\stt_{f_q}(v)$.
  \item Let the remaining suffixes of $w$ and $v$ be
    $w'=\alpha_1\cdots \alpha_n$ for $n>0$ and $v' \in \{
    \alpha_1 \sps \cdots \sps \alpha_n \$, \sps \alpha_1 \sps
    \cdots \sps \alpha_n\$\}$. For every state $q$ of $A$ we have:
    \begin{itemize}
    \item When $q\in Q_1$ and $\delta_1(q,\alpha_1) = q'$ then
      $\btr_A(q,w',0^{Q_2}) = q[\btr_A(q',\alpha_2 \cdots
      \alpha_n,0^{Q_2})]$ and, if $t'\in \stt_{f_{q'}}(\sps \alpha_2 \cdots \sps
      \alpha_n \$)$ then $q[t'] \in
      \stt_{f_q}(v')$. The rule used in the transducer is given by
      case~2(a) of Definition~\ref{defn:transduc}, preceded by $f_q
      \xto{\sps} f_q$ if $v'$ has a leading $\sps$.
    \item When $q \in Q_1$ and $\delta_1(q,\alpha_1)$ is undefined we
      have $\btr_A(q,w',0^{Q_2}) = q[\rej]$ and $q[\rej] \in
      \stt_{f_q}(v')$.  The rule used in the transducer is given by
      case~2(b) of Definition~\ref{defn:transduc}, preceded by $f_q
      \xto{\sps} f_q$ if $v'$ has a leading $\sps$.
    \item When $q\in Q_2$ and $\delta_2(q) = q_1\cdots q_n$, since we
      assume the backtracking run to fail all $q_i$ paths will fail
      and we will have $\btr_A(q,w',0^{Q_2}) =
      q[\btr_A(q_1,w',0^{Q_2}), \ldots, \btr_A(q_n,w', 0^{Q_2})]$. In
      the transducer a rule from case 3 is applied to get
      $q[t_1,\ldots,t_n] \in \stt_{f_q}(v')$ where $t_i \in
      \stt_{f_{q_i}}(\alpha_1 \sps \cdots \sps \alpha_n)$ for each
      $i$. Notice that here a leading $\sps$ in $v$ is
      \emph{required}, but one will always be available (as we cannot
      have two $Q_2$ states in a row due to
      $\delta_2$-flattening). Since none of the $q_i$ are in $Q_2$ the
      next step will read the leading $\alpha_1$ and a new $\sps$ will
      be available in the next recursive step.
    \end{itemize}
    Notice that each step in $\stt$ creates precisely one tree if only
    a single subtree is the recursive call generates only a single
    tree, and as shown above the base case produces only a single
    tree. As such $t \in \stt_{f_q}(v)$ is by induction true for
    exactly one tree for each (properly decorated) $v$. This property
    will be maintained for $a_q$ states as well.
  \end{itemize}
  This establishes that for rejecting backtracking runs
  $t=\btr_A(q,w,0^{Q_2})$, we have $t\in \stt_{f_q}(v)$, for all $q$,
  where $v=\textit{dec}(w)$ with the initial $\$$ removed (we will
  deal with this at the end) and, vice versa, $t\in
  \stt_{f_q}(v)$ is true for exactly one $t$, so $t$ must be the tree $\btr_A(q,w,0^{Q_2})$.

  The proof for the accepting runs follows very similar lines, but with
  the extra wrinkle of how $Q_2$ rules are handled when some path
  accepts. The invariant that $t \in \stt_{a_q}(v)$ is true for at
  most one $t$ is maintained however, as is, of course, the parallel
  to $\btr_A$. Take a $w$ on which the backtracking run of $A$
  \emph{succeeds}. We divide this into the same two cases.
  \begin{itemize}
  \item Let the remaining suffixes of $w$ and $v$ be $w'=\eps$ and
    $v'=\$$ respectively. For every $q\in F$ we have $t=q[\acc]$ and
    $\btr_A(q,w',0^{Q_2})=t$ and $t \in \stt_{a_q}(v')$. This
    completes all cases for the empty input string.
  \item Let the remaining suffixes of $w$ and $v$ be $w'=\alpha_1
    \cdots \alpha_n$ for $n>0$, and $v' \in \{\sps 
    \alpha_1 \cdots \sps \alpha_n\$, \alpha_1 \sps \cdots \sps
    \alpha_n\$$ respectively. For every state $q$ of $A$ we have:
    \begin{itemize}
    \item The case where $q\in Q_1$ and $\delta_1(q,\alpha_1) = q'$ is
      precisely like the failing case except with $a_q$ and $a_{q'}$
      in place of $f_q$ and $f_{q'}$.
    \item The case where $q \in Q_1$ but $\delta_1(q,\alpha_1)$ cannot
      give rise to a backtracking run that succeeds.
    \item When $q\in Q_2$ and $\delta_2(q) = q_1\cdots q_n$ we get
      something slightly more complex. One of the paths $q_i$ will
      accept, by assumption, so $\btr_A(q,w',0^{Q_2}) =
      q[\btr_A(q_1,w',0^{Q_2}), \ldots, \btr_A(q_i,w',0^{Q_2})]$ by
      construction. Case~3 of Definition~\ref{defn:transduc}
      makes it possible to generate any
      $q[f_{q_1},\ldots,f_{q_{i-1}},a_{q_i}]$, so here
      $q[t_1,\ldots,t_i] \in \stt_{a_q}(v')$ with $t_j \in
      \stt_{f_{q_j}}(\alpha_1 \sps \cdots \sps \alpha_n\$)$ for $j\in
      \{1,\ldots,i-1\}$ and $t_i \in \stt_{a_{q_i}}(\alpha_1 \sps
      \cdots\sps \alpha_n \$)$. This is in fact the \emph{only}
      possible case, as the transducer, as here sketched out, can only
      successfully complete its computation from a state $a_q$ if $q$
      eventually accepts the string, and can only complete its
      computation from a state $f_q$ state if $q$ fails.
    \end{itemize}
  \end{itemize}
  Again, this shows that $\stt_{a_q}(v)$ outputs precisely one tree if $v$ is $\textit{dec}(w)$
  with the initial $\$$ removed. That initial $\$$ is now used by the
  initial rules in $\stt$: $q_0' \xto{\$} a_{q_0}$ and $q_0' \xto{\$}
  f_{q_0}$. This means that $\stt$ produces exactly one tree for every
  $\textit{dec}(w)$, and in both the accepting and rejecting case it matches the
  tree from $\btr_A$.

  Finally, we need to deal with \emph{incorrect} decorations. Let $v$
  be a decoration of $w$ which is not $\textit{dec}(w)$. If $v$ has no
  leading $\$$, or no trailing $\$$, or has a $\$$ in any other
  position, $\stt(v)=\emptyset$, since $\stt$ has no other possible
  rules for $\$$. If $v$ contains \emph{extraneous} $\sps$ we still
  have $\stt(v) = \{\btr_A(w)\}$, since they will just be consumed by
  $q \xto{\sps} q$ rules. If some $\sps$ is ``missing'' compared to
  $\textit{dec}(w)$ this either causes $\stt(v) = \emptyset$, if a
  $Q_2$ rule needed it, or $\stt(v) = \{\btr_A(w)\}$, if it is just
  removed by a $q\xto{\sps} q$ rule anyway.
\end{proof}
}
\end{document}